\newcommand{\TC}[1]{\textcolor{cyan}{#1}}
\newcommand{\revtwo}[1]{\textcolor{black}{#1}}
\def\current@color{ Black}
\newcommand{\them}[1]{\textcolor{black}{#1}}
\tikzstyle{int}=[draw, fill=white!20, minimum size=2em]
\tikzstyle{init} = [pin edge={to-,thin,black}]
\newtheorem{thm}{Theorem}
\newtheorem{defi}{Definition}
\newtheorem{lemma}{Lemma}
\newcommand\blfootnote[1]{%
	\begingroup
	\renewcommand\thefootnote{}\footnote{#1}%
	\addtocounter{footnote}{-1}%
	\endgroup
}
\acrodef{AoI}{Age of Information}
\acrodef{IoT}{Internet of Things}
\acrodef{DPP}{Drift-Plus-Penalty}
\acrodef{DTMC}{Discrete-Time Markov Chain}
\acrodef{w.p.}{with probability}
\acrodef{EVD}{EigenValue-Decomposition}
\acrodef{MDP}{Markov Decision Process}
\acrodef{CSI}{Channel State Information}
\acrodef{CMDP}{Constrained Markov Decision Process}
\acrodef{i.i.d}{independent and indentically distributed}
\acrodef{psDPP}{per-slot Drift-Plus-Penalty}
\acrodef{pfDPP}{per-frame Drift-Plus-Penalty}
\title{
	
	Scheduling Policies for AoI Minimization with Timely Throughput Constraints 
}
\author{
	\IEEEauthorblockN{Emmanouil Fountoulakis, 
		Themistoklis Charalambous, \textit{Senior Member, IEEE},
		Anthony Ephremides\textit{, Life Fellow, IEEE},
		Nikolaos Pappas, \textit{Senior Member, IEEE}}
}
\begin{document}
	\immediate\write18{echo $PATH > tmp1}
	\immediate\write18{/Library/TeX/texbin/epstopdf > tmp2} 
	
	\maketitle
	\thispagestyle{plain} 
	\pagestyle{plain}
	
	\begin{abstract}
		\blfootnote{
			E. Fountoulakis was with the Department of Science and Technology, Linköping University, Norrköping, Sweden. He is now with Ericsson AB, Sweden (email: emmanouil.fountoulakis@ericsson.com). 
			
			N. Pappas is with the Department of Computer and Information Science, Linköping University, Linköping, Sweden (nikolaos.pappas@liu.se).
			
			T. Charalambous is with the Department of Electrical and
			Computer Engineering, School of Engineering, University of Cyprus. Email: \{themistoklis.charalambous@aalto.fi\}.
			
			A. Ephremides is with the Electrical and Computer Engineering Department, University of Maryland, College Park, USA. Email: \{etony@umd.edu\}.
			
			The work of N. Pappas has been supported in part by the Swedish Research
			Council (VR), ELLIIT, Zenith, and the European Union (ETHER, 101096526).
			
			The work of T. Charalambous was supported in part by the European
			Research Council (MINERVA, 101044629)
		}
		
		In 5G and beyond communication systems, the notion of latency gets great momentum in wireless connectivity as a metric for serving real-time communications requirements. However, in many applications, research has pointed out that latency could be inefficient to handle applications with data freshness requirements. Recently, Age of Information (AoI) metric, which can capture the freshness of the data, has attracted a lot of attention. 
		In this work, we consider mixed traffic with time-sensitive users; a deadline-constrained user, and an AoI-oriented user. To develop an efficient scheduling policy, we cast a novel optimization problem formulation for minimizing the average AoI while satisfying the timely throughput constraints. The formulated problem is cast as a Constrained Markov Decision Process (CMDP). We relax the constrained problem to an unconstrained Markov Decision Process (MDP) problem by utilizing the Lyapunov optimization theory and it can be proved that it is solved per frame by applying backward dynamic programming algorithms with optimality guarantees. In addition, we provide a low-complexity algorithm guaranteeing that the timely-throughput constraint is satisfied. Simulation results show that the timely throughput constraints are satisfied while minimizing the average AoI. Simulation results show the convergence of the algorithms for different values of the weighted factor and the trade-off between the AoI and the timely throughput.
	\end{abstract}
	
	%
	
	\IEEEpeerreviewmaketitle 
	\section{Introduction}
	
	With the advent of 5G communication networks, the metric of latency plays a vital role in wireless connectivity for addressing the requirements of real-time communications, such as autonomous vehicles, wireless industrial automation, environmental, and health monitoring, to name a few \cite{abd2019role, shreedhar2019age}. In real-time communications, information is required to arrive at the destination within a certain period (deadline-constrained) due to stringent requirements in terms of latency, while in other cases, it is required to keep the information at the destination as fresh as possible. The notion of packets with deadlines is connected with the \textit{timely-throughput}, that is \textit{the average number of successful packet deliveries before their deadline expiration}  \cite{TheoryQoS2009}. Information ``freshness" is captured by a new metric called \ac{AoI} \cite{kosta2017age, sun2019age}. It was first introduced in \cite{kaulyates2012real}, and it is defined as \textit{the time elapsed since the generation of the status update that was most recently received by a destination}. Furthermore, time-sensitive applications with different requirements co-exist in the same network and share the same resources. Therefore, it is important to allocate the resources efficiently in order to satisfy the requirements of the heterogeneous traffic. 
	
	In order to \them{enhance} our understanding \them{of} such systems, we consider a system with an \ac{AoI}-oriented user and a user with timely throughput requirements, which can be considered two of the main categories in time-sensitive networking. Our goal is to minimize the average \ac{AoI} while satisfying the timely throughput requirements. We consider both time-correlated channel model and \ac{i.i.d} channel model. We utilize tools from Lyapunov optimization theory in order to transform the initial \ac{CMDP} problem into a \ac{MDP} problem, and we provide a dynamic programming algorithm that solves the problem optimally. Furthermore, we provide a low-complexity algorithm that guarantees that the timely throughput constraints are satisfied.
	
	Wireless systems with packets with deadlines have been considered almost two decades ago \cite{shakkottai2002scheduling}. An extensive survey that provides an overview of the mathematical tools that are used in the area of resource control for delay-sensitive networks can be found in \cite{cui2012survey}. Recently, there has been a renewed interest in studying the performance of systems with deadline-constrained traffic \cite{you2018resource, ManosWiOpt2017, ElAzzouni2020, tsanikidis2021power,destounis2018scheduling,tsanikidis2022randomized}, especially due to the ongoing automation of traditional manufacturing and industrial practices under the fourth industrial revolution. Packets with deadlines are connected with the notion of \textit{timely throughput}. Timely throughput is first introduced in \cite{TheoryQoS2009}, and it is defined as the average number of successfully delivered packets before their deadline expiration. In \cite{TheoryQoS2009}, the authors propose an algorithm that can satisfy any feasible timely throughput constraint. An extensive study of the timely throughput of heterogeneous wireless networks is provided in \cite{lashgari2013timely}. 
	
	Real-time scheduling optimization for deadline-constrained traffic in wireless systems has been extensively studied in the literature. In \cite{ManosGC2018}, the authors provide a dynamic algorithm for minimizing the packet drop rate while satisfying the average power constraints. In \cite{TepedeTVT2018}, the authors consider a joint scheduling and power allocation problem for a network with real-time traffic, i.e., packets with deadlines, and non-real time traffic. Furthermore, a mixed type of traffic is considered in \cite{fountoulakisITU}. The authors consider a joint scheduling and power allocation problem for a network with deadline-constrained users and users with minimum throughput requirements. Furthermore, it has been often shown that \them{it} is natural to formulate this kind of problem as \them{an} \ac{MDP} \cite{master2016power, neely2013dynamic, ModianoTWC2006}. For example, in \cite{neely2013dynamic}, the authors formulate the problem of minimizing the packet drop rate while providing queueing stability as an \ac{MDP} problem. Several approximations and also a methodology are provided that can be applied to general problems of this kind. In addition, there are works that consider packets with deadlines in multi-hop networks \cite{TNSM2020, mao2014optimal,tsanikidis2022online,gu2021asymptotically, deng2019online}.

	
	Recently\them{,} the optimization and control of average or peak \ac{AoI} has  attracted a lot of attention for a plethora of scenarios \cite{talak2020age, kadota2021minimizing, kadota2019scheduling,zhou2019joint,abd2020aoi, BedewyOptimalSampling,fountoulakis2021joint,yao2020age, ceran2019reinforcement, GstamIoTJ}. There are two cases for generation of the status updates: i) status updates arrive randomly at the users \cite{talak2020age, kadota2021minimizing}, ii) \textit{generate-at-will} mechanism that allows the user to sample fresh data at will \cite{ kadota2019scheduling,zhou2019joint,abd2020aoi}. In \cite{talak2020age, kadota2021minimizing}, the authors consider the problem of \ac{AoI} minimization in single-hop networks with stochastic arrivals and they provide several scheduling policies. In \cite{kadota2019scheduling}, the problem of \ac{AoI} minimization with throughput constraints in a multi-user network is considered. In \cite{BedewyOptimalSampling}, the problem of joint sampling and scheduling is considered for multi-source systems. Furthermore, there is a line of works that considers transmission and sampling costs \cite{fountoulakis2021joint, zhou2019joint}. In these works, if the sampled information failed to be transmitted due to channel errors, it may be retransmitted in next slots to reduce the sampling cost. In \cite{yao2020age}, the \ac{AoI} minimization problem with energy constraints is considered. It is shown that the optimal policy is a mixture of two stationary deterministic policies. 
	In \cite{ceran2019reinforcement,abd2020aoi, GstamIoTJ}, the optimization of \ac{AoI} in \ac{IoT} and energy harvesting systems has been studied. In these works, the problems are formulated as \acp{MDP} and they are solved by using tools from dynamic programming and reinforcement learning.
	
	Although there are many works that consider the \ac{AoI} optimization or analysis, there are few works that consider \ac{AoI} optimization in a system with heterogeneous traffic \cite{ZhengOpenJournal, SPAWC2019, fountoulakis2022information, sun2021age}. 
	The work that is closer to our work is \cite{sun2021age}. The authors in \cite{sun2021age} consider a wireless network including \ac{AoI}-oriented users and deadline-constrained users. The goal is to minimize the average \ac{AoI} while satisfying the timely throughput constraints. The authors \them{in \cite{sun2021age} also} consider that the time is divided into frames and the frames into slots. \them{However, they additionally assume}  that the \ac{AoI}-oriented user can be scheduled in any time slot within the frame and the value of the \ac{AoI} remains $1$, if the transmission succeeds, during the whole frame. Furthermore, it is \them{assumed} that the channel remains fixed during a frame. \them{On the contrary}, we consider that \ac{AoI} is  $1$ only when the \ac{AoI}-oriented user \them{transmits a packet successfully}. \them{Furthermore},  the channel of \them{a} user can change from slot to slot \them{unlike} from frame to frame. These assumptions make the problem \them{considered in our paper} fundamentally different and more realistic.  
	
	\subsection{Contributions}
	In this work, we consider two users that send their information over an error-prone channel to a common receiver. The first user is \ac{AoI}-oriented and the second user has timely throughput requirements. We consider two-channel model cases: i) i.i.d. channels over time slots, ii) time-correlated channels. Our goal is to minimize the average \ac{AoI} while satisfying the timely throughput requirements. The problem is formulated as a \ac{CMDP} problem which is known to be a difficult problem to solve and standard approaches, such as the method of Lagrange multipliers, cannot be directly applied. To solve this problem, we first apply tools from Lyapunov optimization theory to transform the \ac{CMDP} into an \ac{MDP}. It is shown that the infinite \ac{CMDP} can be reduced to an unconstrained weighted stochastic shortest path problem, i.e., a finite-horizon \ac{MDP}, that is easier to be solved. Note that our approach can be applied in more general stochastic optimization problems of that type. The relaxed problem is then solved by invoking backward dynamic programming. Simulation results show that the optimal policy schedules the \ac{AoI} user multiple times within a frame with some probability. That means that the optimal decision is not to schedule the \ac{AoI} only at specific slots of the frame with high probability, e.g., at the beginning or the end of the frame. Instead, it is more beneficial to spread the scheduling time across all the slots within the frame.
	In addition, we develop a low-complexity algorithm providing a solution in polynomial time and it guarantees that the timely throughput constraint is satisfied. Simulation results show that the solution of the low-complexity is close to the near-optimal algorithm for certain values of transition probabilities of the channels.

	\section{System Model}
	We consider two users transmitting their information in \them{the} form of packets to a single receiver over a wireless fading channel, as shown in Fig. \ref{fig:systemmodel}. 
	\begin{figure}[h!]
		\centering
		\includegraphics[scale=0.2]{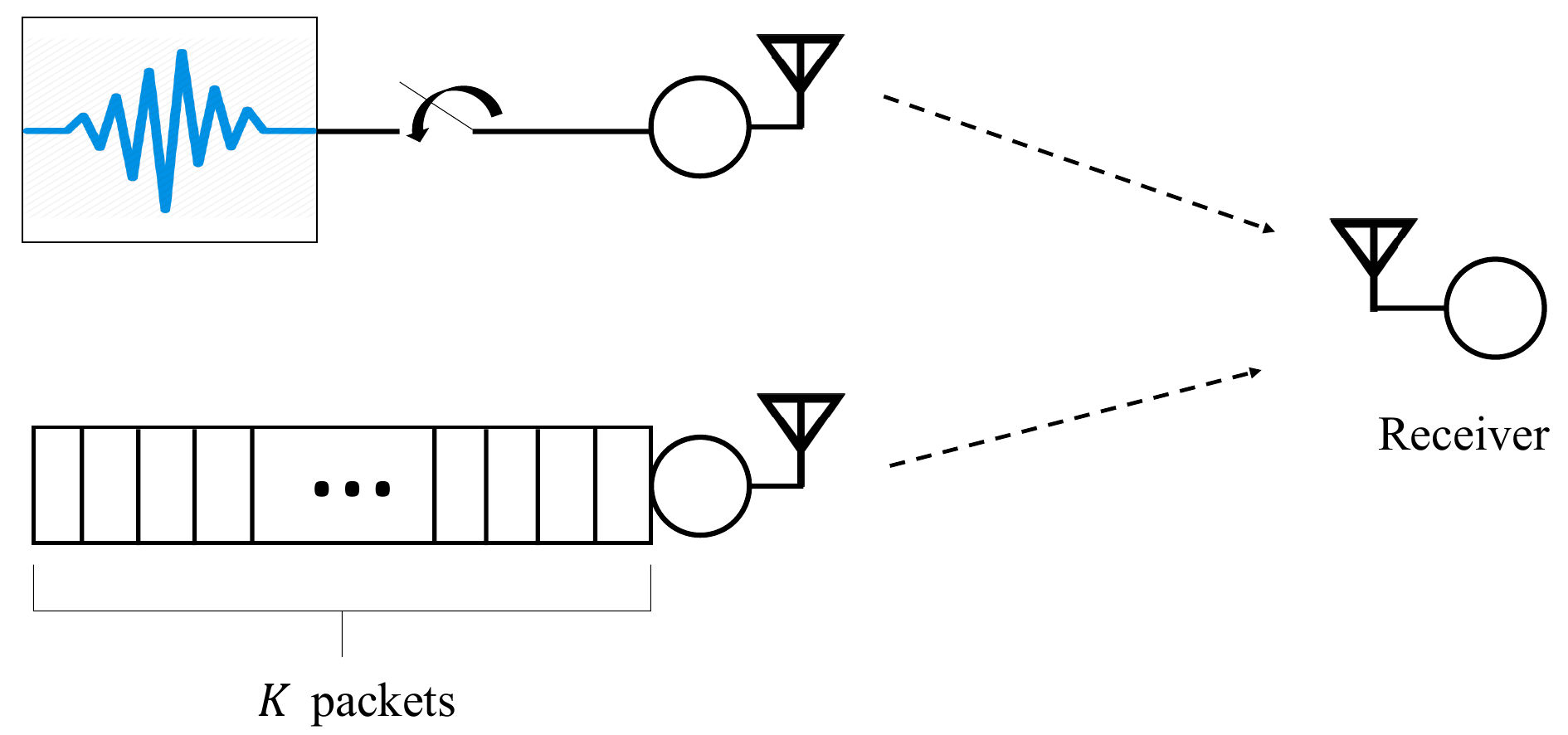}
		\caption{\them{System with an \ac{AoI}-oriented user and a deadline-constrained user}.}
		\label{fig:systemmodel}
	\end{figure}
	Let $i \in \{1,2\}$ denote the $i^{\text{th}}$ user of the system. Time is assumed to be slotted, and let $t \in \mathbb{Z}_{\geq 0}$ denote the $t^{\text{th}}$ slot, where $t \in \mathbb{Z}_{\geq 0}$ is the set of non-negative integer numbers. We consider a centralized scheduler that at every slot  decides to schedule up to one user. Let $u_{i}(t)$ denote the decision of the scheduler, where
	\begin{align}
		u_{i}(t) =
		\begin{cases}
			1\text{ , if user } i \text{ is scheduled at  time slot } t\text{,}\\
			0\text{, otherwise,}
		\end{cases}
	\end{align}
	and \them{$\bm{u}(t) = \left[u_1(t)\text{ } u_2(t)\right]^T$.}
	Note that $\sum\limits_{i} u_i(t) \leq 1$, $\forall t$.
	Due to the \them{wireless} nature of the channels, we assume that a packet is successfully transmitted from user $i$ to the receiver with some probability. 
	Let $d_{i}(t)$ denote the successful packet reception of user $i$, given that $u_i(t)=1$, where
	\begin{align}
		d_i(t) = 
		\begin{cases}
			1\text{, successful packet reception for user }i\text{,}\\
			0\text{, otherwise,}
		\end{cases}
	\end{align}
	and \them{$\bm{d}(t) = \left[d_1(t) \text{ }d_2(t)\right]^T$.}
	
	User $1$ is an \ac{AoI}-oriented user who either samples and transmits fresh information to the receiver or remains silent depending on the scheduling policy. Let $A(t) \in \mathbb{Z}_{> 0}$ represent the \ac{AoI} of user $1$ at the receiver. We assume that the value of the \ac{AoI} is bounded by  $A_{\text{max}} >0$. This assumption is considered for the following two reasons:
	1) in practical applications, values of \ac{AoI} that are larger than a threshold will not provide additional information about the staleness of the packet, \cite{ceran2019reinforcement,zhou2019joint,abd2020aoi};
	2) assuming unbounded \ac{AoI} will complicate significantly the solution of the optimization problem without giving us additional insights into the performance of the system. 
	Moreover, this assumption has been widely used in recent works that study the average AoI, \cite{zhou2019joint, moltafet2019power, ceran2019reinforcement}.
	
	The evolution of the \ac{AoI} at the receiver is described as
	\begin{align}
		A(t+1) = \begin{cases}
			1\text{, successful packet transmission of user } 1\text{,}\\
			\min{\{A_{\text{max}}, A(t)+1\}}\text{, otherwise.}
		\end{cases}
		\label{eq:AgeEvolution}
	\end{align}
	The time average \ac{AoI} is defined as
	$\bar{A} = \lim\limits_{t\rightarrow \infty} \sup  \frac{1}{t} \sum\limits_{\tau=0}^t \mathbb{E}\{A(\tau)\}\text{,}$
	where the expectation is with respect to the scheduling policy and the channel randomness. Note that we use \them{a} generate-at-will policy.
	Furthermore, since we do not consider sampling cost, user $1$ does not retransmit a packet if the transmission fails. Instead, user $1$ samples new information whenever it is scheduled in one of the following slots. We consider that the sampling and transmission process needs one time slot to be performed. 
	
	User $2$ is deadline-constrained, and it includes packets that must be transmitted within a specific time frame, i.e., before a deadline. More specifically, we consider that $K$ packets arrive in the queue of user $2$ every $T$ slots, where $K\leq T$. We consider that a packet needs one slot to be transmitted. The time between two adjacent packet arrivals is a time frame whose length is $T$ time slots. Let $m \in \mathbb{Z}_{\geq 0}$ denote the $m^{th}$ frame, and $t_m=mT$ be the first slot of frame $m$.
	We also denote by $f_{m}(t)$ the \them{time interval} between the beginning of the current frame and  slot $t$, i.e., $f_{m}(t) = t - mT$, $t\in\{mT, mT+1, \ldots, mT+T-1\}$. Let $Q(t)$ denote the number of packets that are in the queue of user $2$ in  time slot $t$. The evolution of the queue is described as
	$Q(t+1) = \max\{Q(t)-d_2(t),0\}\bm{1}_{\{f_m(t) \neq 0\}} + \them{K}\mathbf{1}_{\{f_m(t)  = 0\}}\text{, } \forall t \in\{mT, \ldots, mT+T-1\}\text{.}$
	
	\subsection{Frame-based Timely Throughput Requirements}
	The  \textit{timely throughput} measures the  average number of successful deliveries, i.e., the packets delivered before the deadline \cite{lashgari2013timely,TheoryQoS2009}. 
	Timely-throughput was proposed in \cite{TheoryQoS2009} as an analytical metric for evaluating both throughput and Quality of Service for deadline-constrained traffic.
	In this work, we are interested \them{in keeping}  \textit{frame-based  timely throughput} \them{above a threshold}, for user $2$. \them{The frame-based timely throughput} is defined as
	\begin{align}
		\lim\limits_{M\rightarrow \infty} \frac{1}{M+1} \sum\limits_{m=0}^M\left(\sum\limits_{\tau = mT}^{(m+1)T -1} \mathbb{E} \left\{d_2(\tau) \right\} \right)\text{.}
		\label{def:timelythroughput}
	\end{align}
	Our motivation for defining  \textit{per frame average timely throughput} is that it is not only important to serve as many packets as possible, but also to keep high QoS for every frame. \them{In other words, to ensure a high average number of successful packets delivery before their deadline expiration, i.e., before the end of the frame.} Consider a real-time video transmission where videos consist of frames. We care to have high-quality video transmission, and also a smooth transmission of the video from frame to frame which can be captured by \them{ensuring that the value of the expression in} \eqref{def:timelythroughput} \them{is above a threshold}.
	
	\subsection{Channel Model}
	In this work, we consider two cases for the channel model: \them{1)} i.i.d channel over slots, \them{2)} correlated channels over slots. 
	
	\subsubsection{i.i.d channels} In the i.i.d case, the channels of user $1$ and user $2$ are considered as Bernoulli processes with success probabilities $p_1$ and $p_2$, respectively. Therefore, if we schedule user $i$, the probability of the successful transmission is $p_i$. The success or failure of a transmission does not depend on the channel state of the previous slot.
	\subsubsection{Correlated channels}
	The channel of each user $i$ is assumed to be a time-correlated fading channel and each one evolves as a two-state Gilbert-Elliot model. The evolution of the channel states can be modeled as a two-state Markov chain. Let $h_i(t)$ denote the channel state of user $i$ at  time slot $t$, which is modeled as a Markov chain with two states and $\mathbf{h}(t) = \left[h_1(t)\text{ }h_2(t)\right]^T$. ``Bad" state represents deep fading of the channel and any transmission will fail. ``Good" state represents mild fading of the channel and any transmission will succeed. The channel transition probabilities are given by $\Pr\{h_{i}(t+1)=1|h_i(t)=1\} = p_{11,i}$, $\Pr\{h_{i}(t+1)=1|h_i(t)=0\} = p_{01,i}$. We consider delayed channel sensing for both users i.e., the channel state for each user $i$ is known at the receiver only at the end of each slot. Therefore, at the beginning of each slot, the scheduler knows the channel state based on the previous \ac{CSI} and with some probability.

	\section{Problem Formulation}
	\revtwo{In this section, we provide the formulation of the problem, and we give some basic definitions and assumptions that are necessary for designing the scheduling policies and the corresponding performance guarantees.}
	\begin{defi}[Scheduling Policy]
		A scheduling policy $\omega$ is a (possibly randomized) rule of scheduling user $i$ at each time slot $t$.
		Policy $\omega$ takes into account only the information of the state of the system at time slot $t$, which includes, the information about the channel state (i.e., success probability), the queue length of user $2$, the value of the \ac{AoI} of user $1$.
		Note that since we consider random channels, the outcome of a scheduling decision is a random variable as well. A policy $\omega$ specifies a probability distribution $\mathbf{u}^{\omega}(t)$, where $u_i^{\omega}(t)$ is the probability of scheduling user $i$ at time slot $t$ given the system state. The set of such policies is denoted by $\Omega$.
	\end{defi}
	In this work, our target is to find a policy $\omega$ that solves the following optimization problem
	\begin{subequations}
		\label{optproblem}
		\begin{align}
			\min\limits_\omega \quad & \bar{A}^{\omega}\label{opt:obj} \\
			\text{s.~t.} \quad &  	\lim\limits_{M\rightarrow \infty} \inf \frac{1}{M+1} \sum\limits_{m=0}^M\left(\sum\limits_{\tau = mT}^{(m+1)T-1} \mathbb{E} \left\{d^{\omega}_2(\tau) \right\} \right) \geq q \label{opt:constr}\text{,}
		\end{align}
	\end{subequations}
	where $q$ is the minimum per frame average timely throughput requirements of user $2$, $0\leq q\leq K$.
	
	
	\textbf{Remark 1.}
	\textit{The optimization problem in \eqref{optproblem} belongs to the general class of \acp{CMDP}, which are usually difficult to solve \cite{altman1999constrained}. In \cite{salodkar2008line, djonin2007q}, authors utilize the Lagrange multiplier approach to treat power minimization for single-queue systems with an average delay constraint. Our approach that is described below, while similar to the Lagrange multiplier approach, can handle the stochastic nature of the shortest path problem over frames, which is otherwise hard to achieve. This specific characteristic of our problem renders the Lagrange multiplier approach inappropriate. For this reason, we are deploying virtual queues which are able to capture the dynamics of the system under consideration. Additionally, the use of virtual queues provides information about the stability of the overall system. 
		In more detail, function $d_2(\tau)$ in constraint (5b) is a random variable that depends on the channel, and the transmission decisions at every time slot. The constraint represents the average value of served packets per time frame. The specific characteristic of this problem is that it lies in the category of stochastic shortest path problems. The parameters of the stochastic path change from frame to frame. Therefore, we have to find a way to monitor these changes in order to satisfy the time average constraint. In other words, we have to dynamically update a weight from frame to frame to monitor how much of the constraint has been satisfied. That is why we choose to solve this problem by incorporating virtual queues that capture the dynamics of the system. Classical Lagrange methodology cannot be directly applied because the Lagrangian multipliers do not capture the stochastic result of function $d_2(\tau)$.}

	\begin{defi}[Feasible Region of Timely-Throughput Requirements]
		Consider the set of all policies $\Omega$, and denote by $\bar{q}^\omega$ the timely-throughput that is achieved by applying policy $\omega$. Then, the feasible region of timely throughput requirements, denoted by $\Gamma$, is the set of variables $q$ which satisfies
		$\Gamma  =  \cup_{\omega \in \Omega} \left\{ q \in [0, K]  | q\leq \bar{q}^\omega \right\}\text{.}$
	\end{defi}

	\subsection{Slackness Assumptions}
	The problem in \eqref{optproblem} is a \ac{CMDP} with state $s(t) = (A(t),Q(t),\mathbf{h}(t))$, where $\mathbf{h}(t) = [h_1(t)\text{ }h_2(t)]^T$, and $h_i(t)$ is the channel information of user $i$ at time slot $t$. Under mild assumptions (such as the state space being finite, and the action space is also finite) the MDP has an optimal stationary policy that chooses $\mathbf{u}(t)$ as a stationary and possibly randomized function of the state $s(t)$ only \cite{neely2013dynamic}. Note that the system experiences regular renewals, i.e., at the beginning of each frame a batch of $K$ packets arrive at the queue while the remaining packets from the previous frame has been discarded. Therefore, the performance of any $s(t)$-only policy can be characterized by ratios of expectations over one renewal time \cite{NeelyBook}. Thus, we make the following assumption.
	
	\textit{Assumption 1}: There exists a policy $\omega_1\in\Omega$ that satisfies the following, over any renewal frame:
	\begin{align}\label{eq:Ass1Obj}
		\frac{\mathbb{E}\left[\sum\limits_{\tau=mT}^{(m+1)T-1}  A^{\omega_1}(\tau) \right]}{T} & = \bar{A}_{\text{opt}}\text{, }\\
		\them{q}- \mathbb{E}\left[ \sum\limits_{\tau=mT}^{(m+1)T-1}  d_2^{\omega_1}(\tau)\right] & \leq 0\text{,}\label{eq:Ass2Obj}
	\end{align}
	where $A^{\omega_1}(\tau)$, and $d_2^{\omega_1}(\tau)$, are the values of $A(t)$ and $d_2(t)$ obtained by applying policy $\omega_1$, and $\bar{A}_{\text{opt}}$ is the optimal value of the time average AoI.
	Note that Assumption $1$ is mild and holds whenever problem $\eqref{optproblem}$ is feasible, i.e., $\forall q \in \Gamma$. We now make a stronger assumption guaranteeing that the constraint in \eqref{opt:constr} is met with $\epsilon$-slackness. In the following assumption, we focus only on the satisfaction of the constraints.
	This assumption is related to standard ``Slater-type" assumptions in optimization theory \cite{bertsekas1997nonlinear}.
	
	\textit{Assumption 2:} There is a value $\epsilon>0$ and a policy $\omega_2$ that satisfies the following over any renewal frame:
	\begin{align}
		\label{eq:Ass2const}
		\frac{q -\mathbb{E} \left[ \sum\limits_{\tau=mT}^{(m+1)T-1}  d_2^{\omega_2}(\tau) \right]}{T} \leq -\epsilon\text{,}
	\end{align}
	\them{where $d_2^{\omega_2}(\tau)$ is the value of $d_2(\tau)$ obtained by applying policy  $\omega_2$.}
	In the next section, we describe our proposed dynamic control algorithm. In Theorem 2, we prove that if there exists a policy that satisfies Assumption 2, the virtual queue is bounded and therefore, it is strongly stable. As a consequence, the average timely throughput constraint is satisfied.
	
	\section{Dynamic Control Algorithm}
	Before describing our proposed algorithm that solves $\eqref{optproblem}$, let us recall a definition and a basic theory that comes from the theory of stochastic processes \cite{meyn2012markov}. Consider a system with \them{$N$} queues. The number of unfinished jobs of queue \them{$n$} is denoted by $q_n(t)$, and \them{$\mathbf{q}(t) = \left[q_1(t)\text{ }\ldots\text{ }q_N(t)\right]^T$}. The Lyapunov function and the Lyapunov drift are denoted by $L(\mathbf{q}(t))$ and $\Delta(L(\mathbf{q}(t))) \triangleq \mathbb{E}\left\{L(\mathbf{q}(t+1)) -  L(\mathbf{q}(t)) | \mathbf{q}(t)\right\}$, respectively, and they are defined below.
	\begin{defi}[Lyapunov function]
		A function $L: \mathbb{R}^N \rightarrow \mathbb{R}$ is a Lyapunov function if it has the following properties:
		1) $L(\mathbf{x}) \geq 0$, $\forall \mathbf{x} \in \mathbb{R}^{\them{N}}$,
		2) It is non-decreasing in any of its arguments,
		3) $L(\mathbf{x}) \rightarrow \infty$, as $||\mathbf{x}|| \rightarrow +\infty$\text{.}
	\end{defi}
	\begin{defi} [Strong Stabilitity]
		A discrete time process $\mathbf{Q}(t)$ is \textit{strongly strable} if:
		
		$\lim\sup\limits_{t\rightarrow \infty } \frac{1}{t} \sum\limits_{\tau = 0}^{t-1} \mathbb{E}\left\{|\mathbf{Q}(\tau)|\right\} \leq \infty \text{.}
		$
	\end{defi}
	\begin{thm}[Lyapunov Drift]
		If there exists positive values $B$ and $\varepsilon>0$ such that  for all time slots we have $\Delta(\mathbf{q}(t)) \leq B - \varepsilon \sum\limits_{n=1}^N q_n (t)$, then the system $\mathbf{q}(t)$ is strongly stable.
	\end{thm}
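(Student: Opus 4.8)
The plan is to derive the standard time-average queue bound directly from the one-step drift inequality. Recall that the system $\mathbf{q}(t)$ is \emph{strongly stable} when $\limsup_{T\to\infty}\frac{1}{T}\sum_{t=0}^{T-1}\sum_{n=1}^N \mathbb{E}\{q_n(t)\} < \infty$, so the objective is to exhibit an explicit finite upper bound of this form; the natural candidate, as we will see, is $B/\epsilon$.

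First I would remove the conditioning in $\Delta(L(\mathbf{q}(t)))$. Taking expectations of the hypothesis $\Delta(L(\mathbf{q}(t))) \le B - \epsilon\sum_{n=1}^N q_n(t)$ and using the tower property of conditional expectation gives $\mathbb{E}\{L(\mathbf{q}(t+1))\} - \mathbb{E}\{L(\mathbf{q}(t))\} \le B - \epsilon\sum_{n=1}^N \mathbb{E}\{q_n(t)\}$ for every $t$. Next I would sum this inequality over $t = 0,1,\ldots,T-1$: the left-hand side telescopes to $\mathbb{E}\{L(\mathbf{q}(T))\} - \mathbb{E}\{L(\mathbf{q}(0))\}$, so $\mathbb{E}\{L(\mathbf{q}(T))\} - \mathbb{E}\{L(\mathbf{q}(0))\} \le BT - \epsilon\sum_{t=0}^{T-1}\sum_{n=1}^N\mathbb{E}\{q_n(t)\}$.

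Invoking the non-negativity property $L(\mathbf{x}) \ge 0$ from Definition 1 (hence $\mathbb{E}\{L(\mathbf{q}(T))\} \ge 0$) and assuming a finite initial condition $\mathbb{E}\{L(\mathbf{q}(0))\} < \infty$, I can rearrange to $\epsilon\sum_{t=0}^{T-1}\sum_{n=1}^N\mathbb{E}\{q_n(t)\} \le BT + \mathbb{E}\{L(\mathbf{q}(0))\}$. Dividing by $\epsilon T$ and letting $T\to\infty$ yields $\limsup_{T\to\infty}\frac{1}{T}\sum_{t=0}^{T-1}\sum_{n=1}^N\mathbb{E}\{q_n(t)\} \le \frac{B}{\epsilon} < \infty$, which is precisely strong stability.

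The argument is essentially a telescoping sum, so there is no deep obstacle; the only delicate points are bookkeeping ones. One must justify that every expectation appearing is finite, so that the telescoping and the rearrangement are legitimate — this is where the finiteness of $\mathbb{E}\{L(\mathbf{q}(0))\}$ together with an induction (using the drift bound itself to propagate $\mathbb{E}\{L(\mathbf{q}(t))\}<\infty$ to $t+1$) is needed. One should also be careful that the drift hypothesis is assumed to hold for all states reachable by the process, so that taking expectations over the (arbitrary) distribution of $\mathbf{q}(t)$ is valid. With these in place, the bound $B/\epsilon$ follows and the proof is complete.
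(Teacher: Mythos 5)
Your proof is correct: the tower-property step, the telescoping sum, the use of $L(\mathbf{x})\geq 0$, and the division by $\epsilon T$ followed by $T\to\infty$ give exactly the bound $\limsup_{T\to\infty}\frac{1}{T}\sum_{t=0}^{T-1}\sum_{n=1}^{N}\mathbb{E}\{q_n(t)\}\leq \frac{B}{\epsilon}$, which is strong stability. Note that the paper itself does not prove this theorem --- it states it as a known result from Lyapunov drift theory (citing the stochastic-processes literature) and only offers intuition --- so there is no in-paper proof to compare against; your argument is the canonical one (as in Neely's framework, which the paper relies on elsewhere), and your added care about finiteness of $\mathbb{E}\{L(\mathbf{q}(0))\}$ and propagating finiteness of the expectations by induction is exactly the right bookkeeping to make the telescoping rearrangement legitimate.
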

	The intuition behind Theorem 1 is that if we have a queueing system and we provide an algorithm for which the Lyapunov drift \them{becomes negative for large queue sizes, then the Lyapunov function decreases and subsequently the queue sizes. As a consequence, the queues remain bounded and the overall system is stable.}
	
	We define a virtual queue $Z(t)$ to represent constraint  $\eqref{opt:constr}$, where $Z(0)=0$. We update the value of the virtual queue as 
	$Z(t+1) = \max \left[ Z(t)-d_2(t),0 \right] + \frac{q}{T}\text{.}$
	Process $Z(t)$ can be seen as a queue with ``service rate" $\bar{d}_2$ and ``arrival rate" $\frac{q}{T}$. We will show that the average constraint in $\eqref{opt:constr}$ is transformed into a queue stability problem. 
\begin{defi}
		A discrete time process $Q(t)$ is rate stable if $\lim\limits_{t\rightarrow \infty} \frac{Q(t)}{t} =0$ with probability $1$.
\end{defi}
\begin{lemma}
	If $Z(t)$ is rate stable, then constraint  $\eqref{opt:constr}$ is satisfied. 
\end{lemma}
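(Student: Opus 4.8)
The plan is to extract a one-step inequality from the virtual-queue recursion \eqref{eq: virtualQevolution} and telescope it. Since $\max[Z(t)-d_2(t),0]\ge Z(t)-d_2(t)$, the update gives $Z(t+1)\ge Z(t)-d_2(t)+q/T$ for every $t$. Summing this over $t=0,1,\dots,\tau-1$ and using $Z(0)=0$ yields the pathwise bound
\begin{align*}
Z(\tau)\ge \frac{q\tau}{T}-\sum_{t=0}^{\tau-1}d_2(t),
\end{align*}
so that, dividing by $\tau$ and rearranging,
\begin{align*}
\frac{1}{\tau}\sum_{t=0}^{\tau-1}d_2(t)\ge \frac{q}{T}-\frac{Z(\tau)}{\tau}.
\end{align*}

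The next step is to take expectations and let $\tau\to\infty$. The subtle point is that rate stability is the almost-sure statement $Z(\tau)/\tau\to 0$, whereas \eqref{opt:constr} is phrased through expectations; I would bridge this with the deterministic bound $0\le Z(\tau)\le q\tau/T$, which holds because $Z(t)\ge 0$ and $d_2(t)\ge 0$ force $Z(t+1)\le Z(t)+q/T$. Thus $Z(\tau)/\tau$ is uniformly bounded, and dominated convergence gives $\mathbb{E}\{Z(\tau)\}/\tau\to 0$. Taking expectations in the last display and then the liminf over $\tau$ therefore yields
\begin{align*}
\liminf_{\tau\to\infty}\frac{1}{\tau}\sum_{t=0}^{\tau-1}\mathbb{E}\{d_2(t)\}\ge \frac{q}{T}.
\end{align*}

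Finally I would convert this per-slot average into the frame-based timely throughput of \eqref{def:timelythroughput}. Using the identity $\sum_{m=0}^{M}\sum_{\tau=mT}^{(m+1)T-1}d_2(\tau)=\sum_{\tau=0}^{(M+1)T-1}d_2(\tau)$, the left-hand side of \eqref{opt:constr} equals $\frac{(M+1)T}{M}\cdot\frac{1}{(M+1)T}\sum_{\tau=0}^{(M+1)T-1}\mathbb{E}\{d_2(\tau)\}$; as $M\to\infty$ the first factor tends to $T$, while specializing the previous display to the subsequence $\tau=(M+1)T$ shows the second factor has liminf at least $q/T$, so the product is at least $q$. Hence rate stability of $Z(t)$ implies that the constraint \eqref{opt:constr} holds. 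I expect the only real obstacle to be the care needed in these limiting arguments — the passage from almost-sure rate stability to an inequality on expectations, and relating the $\lim$ appearing in \eqref{opt:constr} to a liminf — since the algebraic core is simply the telescoping of the queue recursion.
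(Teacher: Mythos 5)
Your proof is correct and follows essentially the same route as the paper: the telescoped inequality $Z(\tau)\ge q\tau/T-\sum_{t<\tau}d_2(t)$ you derive is exactly the ``basic sample path property'' the paper cites from Neely, and the conclusion from rate stability is the same. Your version is in fact more complete, since you also supply the uniform bound $Z(\tau)\le q\tau/T$ and the dominated-convergence step needed to pass from the almost-sure statement $Z(\tau)/\tau\to 0$ to the expectation form of \eqref{opt:constr}, details the paper leaves implicit.
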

\begin{proof}
	Using the basic sample property  \cite[Lemma 2.1, Chapter 2]{NeelyBook}, we have
	\begin{align}
		\frac{Z(t_M)}{t_M} - \frac{Z(0)}{t_M} \geq \frac{1}{M+1} \sum\limits_{m=0}^{M} \left(\sum\limits_{\tau=mT}^{(m+1)T -1} d_2(\tau) -\frac{q}{T}\right) \text{.}
	\end{align}
	Therefore, if $Z(t_M)$ is rate stable, so that $\frac{Z(t_M)}{t_M} \rightarrow 0$, with probability $1$, then the constraint in $\eqref{opt:constr}$ is satisfied.
\end{proof}

\subsection{Lyapunov Drift}
We define the following quadratic Lyapunov function as
$
L(Z(t)) \triangleq \frac{1}{2} Z^2(t)\text{.}
$
We define the \textit{frame-based Lyapunov drift} as 
$\Delta(Z(t_m)) \triangleq \mathbb{E} \left[ L(Z(t_m+T)) - L(Z(t_m)) | Z(t_m) \right]\text{,}$
where $t_m=mT$ is the starting slot of the $m^{\text{th}}$ frame. 
\begin{lemma}
	Under any policy $\mathbf{u}(\tau)$ for all slots during a renewal frame $\tau \in \left\{t_m, \ldots, t_m+T-1\right\}$, we have $
	\Delta(Z(t_m))   \leq B + \mathbb{E}\left[ G(t_m) | Z(t_m) \right]\text{,}
	$
	where $G(t_m)$ is defined as
	$
	G(t_m) \triangleq Z(t_m) \sum\limits_{\tau=t_m}^{t_m+T-1} (q - d_2(\tau)) \text{,}
	$
	and $B$ is a finite constant defined as 
	$B \triangleq \frac{T q^2+T(T-1)}{2} \text{.}$
\end{lemma}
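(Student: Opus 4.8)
The plan is to unroll the virtual-queue recursion $Z(t+1) = \max[Z(t) - d_2(t), 0] + q/T$ over a full frame and bound the change in the Lyapunov function $L(t) = \frac12 Z^2(t)$ slot by slot. First I would establish the elementary one-slot inequality: for any $a \ge 0$, $b \ge 0$, $c \ge 0$ with $Z' = \max[Z - b, 0] + c$, one has $ (Z')^2 \le Z^2 + b^2 + c^2 + 2Z(c - b)$; this follows because $\max[Z-b,0]^2 \le (Z-b)^2 = Z^2 - 2Zb + b^2$ and then expanding $(\max[Z-b,0] + c)^2 = \max[Z-b,0]^2 + 2c\max[Z-b,0] + c^2 \le Z^2 - 2Zb + b^2 + 2cZ + c^2$, using $\max[Z-b,0] \le Z$ since $Z,b\ge 0$. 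Applying this with $b = d_2(\tau) \in \{0,1\}$ and $c = q/T$, and noting $d_2(\tau)^2 = d_2(\tau) \le 1$ and $(q/T)^2 \le q^2/T^2$, I get $L(\tau+1) - L(\tau) \le \frac12\big(1 + q^2/T^2\big) + Z(\tau)\big(q/T - d_2(\tau)\big)$.

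Next I would sum this over the $T$ slots of the frame, $\tau = t_m, \ldots, t_m + T - 1$, to obtain $L(t_m + T) - L(t_m) \le \frac{T}{2}\big(1 + q^2/T^2\big) + \sum_{\tau=t_m}^{t_m+T-1} Z(\tau)\big(q/T - d_2(\tau)\big)$. The main technical point is to replace $Z(\tau)$ inside the sum by $Z(t_m)$, since $G(t_m)$ in the statement is expressed purely in terms of $Z(t_m)$. Because $Z$ increases by at most $q/T$ per slot, $Z(\tau) \le Z(t_m) + (\tau - t_m)\frac{q}{T} \le Z(t_m) + (T-1)\frac{q}{T}$ for $\tau$ in the frame; also $q/T - d_2(\tau) \le q/T \le 1$ (since $q \le K \le T$). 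So $\sum_\tau Z(\tau)(q/T - d_2(\tau)) \le \sum_\tau Z(t_m)(q/T - d_2(\tau)) + (T-1)\frac{q}{T}\cdot T \cdot \frac{q}{T} \le Z(t_m)\sum_\tau(q - d_2(\tau))\cdot\frac1T\cdot T + \frac{(T-1)q^2}{T}$. Collecting the constants: $\frac{T}{2} + \frac{q^2}{2T} + \frac{(T-1)q^2}{T}$, which I would then bound by $B = \frac{Tq^2 + T(T-1)}{2}$, using $q \le T$ so $q^2/T \le q \le T$, hence $\frac{q^2}{2T} + \frac{(T-1)q^2}{T} \le \frac{q^2}{2} + (T-1)q^2/T \cdot \ldots$; I would choose the crude bounds carefully so the leftover constant lands at or below the stated $B$. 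Finally, take conditional expectation given $Z(t_m)$ on both sides, which turns the left side into $\Delta(t_m)$ and the right side into $B + \mathbb{E}[G(t_m)\mid Z(t_m)]$.

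The step I expect to be the real obstacle is the bookkeeping in the constant: matching the accumulated slack terms (the $b^2$ and $c^2$ contributions plus the correction from swapping $Z(\tau) \to Z(t_m)$) exactly against $B = \frac{Tq^2 + T(T-1)}{2}$. The correct way to get a clean constant is probably to use $Z(\tau) \le Z(t_m) + (\tau - t_m)$ (the crude bound $q/T \le 1$ making each slot's increment at most $1$) rather than carrying $q/T$ around, which gives correction $\sum_{\tau}(\tau - t_m) \cdot 1 = \frac{T(T-1)}{2}$ contributing the $\frac{T(T-1)}{2}$ term, while the $T$ copies of $\frac12 d_2^2 \le \frac12$ and $\frac12(q/T)^2$ terms get absorbed: the $Tq^2/T^2 = q^2/T \le q^2$-type bound, together with an honest accounting, yields the $\frac{Tq^2}{2}$ term. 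Everything else is routine algebra and a single application of the tower property of conditional expectation.
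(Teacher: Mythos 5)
Your route is essentially the paper's: the one-slot quadratic bound on $Z^2(t+1)$, replacement of $Z(\tau)$ by $Z(t_m)$ with a correction growing linearly in the elapsed slots, the identity $\sum_{\tau=t_m}^{t_m+T-1}(\tau-t_m)=\tfrac{T(T-1)}{2}$, telescoping over the frame, and a final conditional expectation. However, one step as you wrote it would fail: you substitute the upper bound $Z(\tau)\le Z(t_m)+(\tau-t_m)\tfrac{q}{T}$ into the product $Z(\tau)\bigl(\tfrac{q}{T}-d_2(\tau)\bigr)$, but the second factor is negative on every slot with a delivery ($d_2(\tau)=1$, $q/T<1$), and multiplying an upper bound on $Z(\tau)$ by a negative factor reverses the inequality; your correction term $(T-1)\tfrac{q}{T}\cdot T\cdot\tfrac{q}{T}$ likewise implicitly treats that factor as nonnegative. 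The fix is the two-sided estimate the paper uses implicitly: since the per-slot change of $Z$ is at most $1$ in absolute value ($q/T\le 1$ up, $d_2\le 1$ down) and $\bigl|\tfrac{q}{T}-d_2(\tau)\bigr|\le 1$, one has $\bigl(Z(\tau)-Z(t_m)\bigr)\bigl(\tfrac{q}{T}-d_2(\tau)\bigr)\le |Z(\tau)-Z(t_m)|\le \tau-t_m$, which is exactly the paper's step $2\bigl[Z(t)-Z(t_m)\bigr](q-d_2(t))\le 2(t-t_m)$ and produces the $\tfrac{T(T-1)}{2}$ term you anticipate in your closing paragraph.

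Two bookkeeping remarks. First, because you keep the arrival $q/T$ from the virtual-queue update $Z(t+1)=\max[Z(t)-d_2(t),0]+q/T$, your main term comes out as $Z(t_m)\sum_{\tau}\bigl(\tfrac{q}{T}-d_2(\tau)\bigr)$, which is $\le G(t_m)$ because $Z(t_m)\ge 0$, so the stated inequality still follows; the paper's own proof simply uses $q$ per slot, matching $G(t_m)$ directly. Second, your residual constant $\tfrac{T}{2}+\tfrac{q^2}{2T}+\tfrac{T(T-1)}{2}$ does not sit below the stated $B$ when $q\le 1$; but this is the same order of slack the paper itself tolerates (its own accounting yields $\tfrac{Tq^2+T}{2}+\tfrac{T(T-1)}{2}$ before stating $B$), and enlarging $B$ by $T/2$ changes nothing downstream, since the later results only require $B$ to be a finite constant of this form.
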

\begin{proof}
	See Appendix \ref{Appendix:Proof of Lemma 2}.
\end{proof}

\subsection{Frame-Based Drift-Plus Penalty Algorithm}
In order to provide a solution to the optimization problem in $\eqref{optproblem}$, we implement a policy over the course of the frame to minimize the following expression
\begin{align}
	\label{opt:penaltyalgo}
	\min\limits_{\mathbf{u}(t)} \mathbb{E} \left[G(t_m) + V\sum\limits_{\tau=t_m}^{t_m+T-1} A(\tau) | Z(t_m)\right]\text{,}
\end{align}
where the expectation is with respect to the policy and the randomness of the channel. 
The problem in \eqref{opt:penaltyalgo} is a \textit{stochastic shortest path problem} which usually is solved approximately \cite{neely2013dynamic}.

The Stochastic Shortest Path (SSP) problem is a Markov Decision Process (MDP) that generalizes the classic shortest path problem. In an SSP, we consider a graph with nodes $1, 2, \ldots, n, t$. Each node belongs to a stage. At each node, we must select a probability distribution over all possible successor nodes $j$ out of a given set of probability distributions $p_{ij}(u)$ parametrized by a control $u \in U(i)$. Our goal is to traverse the graph with the minimum cost and reach the destination $t$, \cite{bertsekas2011dynamic}. In this work, our goal is to minimize (10) over a frame with time horizon $T$ slots. Within a frame, the problem is considered as an SSP with $T$ stages. Every stage contains a set of nodes. At every node, i.e., at every state, we need to take decisions out of a given set of probability distributions (in (16), (17)). The final goal is to reach the $T$-th slot (or $T$-th stage) with the minimum cost based in (10) that corresponds to the minimum length in an SSP.

In the next subsection, we analyze the performance of the algorithm under the assumption that we have a policy that can approximate \eqref{opt:penaltyalgo}.

\subsection{Approximation Theorem}
\them{\textit{Assumption 3:}}
For constants, $C\geq 0$, $\delta \geq 0$, define a $(C, \delta)$ - approximation of  \eqref{opt:penaltyalgo} to be a policy for choosing $\mathbf{u}(t)$ over a frame $\tau \in \left\{t_m, \ldots, t_m+T-1\right\}$ such that 
\begin{align}\nonumber
	&\mathbb{E} \left[ G(t_m)  + V\sum\limits_{\tau=t_m}^{t_m + T -1} A(\tau) | Z(t_m)   \right] \leq \\	\nonumber
	&\mathbb{E} \left[G^\text{opt}(t_m) + V\sum\limits_{\tau=t_m}^{t_m+T-1} A_{\text{opt}}(\tau) | Z(t_m)\right] \\ \label{ineq: CdApprox} &+ C + \delta Z(t_m) +V\delta \text{,}
\end{align}
\them{where $A_\text{opt}$ and $G_\text{opt}$ are the optimal values.}

\revtwo{
	\textbf{Remark 2.}
	\textit{
		Assumption $3$ is necessary for proving the performance bounds of our proposed algorithm in the following theorem. Note that by solving optimally the \textit{stochastic shortest path problem} in \eqref{opt:penaltyalgo}, the values of $V,\delta$ are $0$. Assumption $3$ is used in this paper to describe approximation algorithms in general.}}
\begin{thm}\label{Theorem: Performance and Constraints}
	Suppose that Assumptions 1, 2, hold for a given $\epsilon>0$, and suppose we use a $(C, \delta)$-approximation every frame so that Assumption 3 holds. If $\epsilon>\frac{\delta}{T}$, then constraint \eqref{opt:constr} is satisfied and 
	\begin{align}
		\label{boundQ}
		\lim\limits_{R\rightarrow \infty} \sup \frac{1}{R} \sum\limits_{r=0}^R \mathbb{E} \left[Z(t_m)\right] \leq \frac{B +C +V(T\bar{A}_{\text{opt}} + \delta)}{\epsilon T - \delta}
	\end{align}
	and 
	\begin{align}\nonumber
		\lim\limits_{t\rightarrow \infty} \sup \frac{1}{t} \sum\limits_{\tau=0}^{t-1} \mathbb{E} \left[A(\tau)\right] &\leq 
		\frac{B}{VT} + \gamma (A_{\max} - 1 )
		 \\\label{boundAge} &+(1-\gamma)\them{A_{\text{opt}}}  
		+ \frac{C}{VT} + \frac{\delta}{T}\text{,}
	\end{align}
	where $\gamma = \frac{\delta}{\epsilon T}\text{.}$
\end{thm}

\begin{proof}
	See Appendix \ref{Appendix: Proof of Theorem 2}.
\end{proof}
\revtwo{\textbf{Remark 3.} \textit{Inequality \eqref{boundQ}, in Theorem \ref{Theorem: Performance and Constraints}, indicates that by minimizing the upper bound of the Lyapunov drift-plus-penalty expression, we can guarantee that the virtual queue is strongly stable and therefore, the timely-throughput requirements are satisfied. In addition, inequality \eqref{boundAge} indicates that the time average AoI is bounded by an expression. As it is shown, by parametrizing the value of $V$ we can take insights on how close to the optimal  solution the $(C,\delta)$ approximation is. In addition, we obtain that if we apply an optimal algorithm for solving problem \eqref{opt:penaltyalgo}, we get an exact solution, and therefore, $C=\delta=0$. From Theorem \ref{Theorem: Performance and Constraints}, we get 
		\begin{align}\label{ineq:virtualqueue}
			\lim\limits_{R\rightarrow \infty} \sup \frac{1}{R} \sum\limits_{r=0}^R \mathbb{E} \left[Z(t_m)\right] & \leq \frac{B +V(T\bar{A}_{\text{opt}})}{\epsilon T}\\\label{ineq:boundAoIOptimality}
			\lim\limits_{t\rightarrow \infty}\sup \frac{1}{t} \sum\limits_{\tau =0}^{t-1}\mathbb{E}[A(\tau)]
			& \leq \frac{B}{VT} + A_\text{opt}\text{.}
		\end{align}
		We observe from the expressions above that, as we increase $V$, we get a solution that is closer to the optimal one because fraction $\frac{B}{VT}$ in \eqref{ineq:boundAoIOptimality} becomes smaller as $V$ increases. However, we get a larger bound in the virtual queue $Z(t_m)$, as shown in \eqref{ineq:virtualqueue}, and therefore, a slower convergence regarding the timely-throughput constraints as shown also in the results. Thus, by changing the value of $V$, we obtain a trade-off between the performance of the algorithm between the value of average AoI and the convergence regarding the timely-throughput constraints.
}}

\section{Solution of the MDP}
The problem in \eqref{opt:penaltyalgo} is an \ac{MDP} problem. Let $\mathcal{A} = \left\{1, 2, \ldots, A_{\text{max}}\right\}$ denote the set of  possible values of \ac{AoI} of user $1$. Furthermore, let $\mathcal{Q} = \left\{0,1,2,\ldots, L\right\}$ be the set of possible values of the queue of user $2$. Then, $\mathcal{A}(t)$ $\in$ $\mathcal{A}$, and $Q(t)$ $\in$ $\mathcal{Q}$. A transmission policy $\mathbf{u}(t)$ specifies the decision rules every time slot $t$. Note that the described \ac{MDP} problem is a finite-horizon problem. We solve the optimization problem at every frame. At the beginning of each frame, we know the channel conditions of the previous slot for each users, the state of the queue (it is always $L$ packets at the beginning of the frame), and the value of the \ac{AoI} of user $1$. 

The next state depends on both the scheduler's decision and the channel states. Note that we schedule user $2$ only if it has remaining packets in its queue, and recall that at the end of frame $m$, we drop all the remaining packets, if there is any. 

\subsection{Transition Probabilities}
The transition probabilities of the system's states depend on the channel model. Below we describe the transition probabilities for each channel model, and the corresponding system states.
\subsubsection{i.i.d channels} In this case, the system state in time slot $t$ is described by  $s(t) = (A(t),Q(t))$. The transition probabilities for the i.i.d channel case are described in \eqref{eq:TransProbiidCase}.
\begin{figure*}[t!]
	\begin{small}
		\begin{align}\label{eq:TransProbiidCase}
			P_{s_t\rightarrow s_{t+1}}=
			\begin{cases}
				p_1\text{, if } \mathbf{u}(t) = \left[1\text{ }0\right]^T\text{ and } s(t+1) = (1,Q(t))\text{,}\\
				1-p_1 \text{, if } \mathbf{u}(t) = \left[1\text{ }0\right]^T\text{ and } s(t+1) = (\min\left\{A(t)+1, A_{\text{max}}\right\},Q(t))\text{,}\\
				p_2\text{, if } \mathbf{u}(t) = \left[0\text{ }1\right]^T\text{ and } s(t+1) = (\min\left\{A(t)+1, A_{\text{max}}\right\}, Q(t)-1)\text{,}\\
				1-p_2\text{, if } \mathbf{u}(t) = \left[0\text{ }1\right]^T \text{ and } s(t+1) = (\min\left\{A(t)+1, A_{\text{max}}\right\}, Q(t))\text{.}
			\end{cases}
		\end{align}
	\end{small}
\end{figure*}

\subsubsection{Time-correlated channels} In this case the system state in time slot $t$ is described by $s(t) = (A(t),Q(t),\mathbf{h}(t))$. The transition probabilities for the case of Gilbert-Elliot channel model  are described in \eqref{eq:TransProbTimeCorrelatedCase}.
\begin{figure*}[t!]
	\begin{small}
		\begin{align}\label{eq:TransProbTimeCorrelatedCase}
			P_{s_t\rightarrow s_{t+1}}=
			\begin{cases}
				p_{11,1}, \text{if } \mathbf{u}(t) = \left[1\text{ }0\right]^T\text{, } h_1(t)=1\text{, and } s(t+1) = (1,Q(t),(1,x)),\\
				1- p_{11,1} \text{, if } \mathbf{u}(t)=\left[1\text{ }0\right]^T\text{, } h_1(t) = 1\text{, and } s(t+1) = (\min\left\{A(t)+1,A_{\text{max}}\right\},Q(t), (0,x)),\\
				p_{01,1}\text{, if } \mathbf{u}(t)=\left[1\text{ }0\right]^T\text{, } h_1(t)=0\text{, and } s(t+1) = (1,Q(t),(1,x)), \\
				1-p_{01,1} \text{, if } \mathbf{u}(t)=\left[1\text{ }0\right]^T\text{, } h_1(t) = 0\text{, and } s(t+1) = (\min\left\{A(t)+1,A_{\text{max}}\right\},Q(t), (0,x)),\\
				p_{11,2}, \text{if } \mathbf{u}(t) = \left[0\text{ }1\right]^T\text{, } h_2(t)=1\text{, and } s(t+1) = (\min\left\{A(t)+1,A_{\text{max}}\right\},Q(t)-1,(x,1)),\\
				1- p_{11,2} \text{, if } \mathbf{u}(t)= \left[0\text{ }1\right]^T\text{, } h_2(t) = 1\text{, and } s(t+1) = (\min\left\{A(t)+1,A_{\text{max}}\right\},Q(t), (x,0)),\\
				p_{01,2}\text{, if } \mathbf{u}(t)= \left[0\text{ }1\right]^T\text{, } h_2(t)=0\text{, and } s(t+1) = (\min\left\{A(t)+1,A_{\text{max}}\right\},Q(t)-1,(x,1)), \\
				1-p_{01,2} \text{, if } \mathbf{u}(t)=\left[0 \text{ }1\right]^T\text{, } h_2(t) = 0\text{, and } s(t+1) = (\min\left\{A(t)+1,A_{\text{max}}\right\},Q(t), (x,0)),
			\end{cases}
		\end{align}
	\end{small}
\end{figure*}
where $x$ is used to show that the value of the corresponding element does not affect the state transition.
\subsection{Backward dynamic programming algorithm}
Initially, we drop the frame indices and take $t$ $\in \left\{0,1,\ldots, T-1\right\}$.
As a first step, we consider that the transmission error probabilities are fixed, i.e., the channels are i.i.d over the slots. 
In our system model, we take an action in time slot $t$, and we observe the cost  in time slot $t+1$. This happens because of the randomness of the channels for both i.i.d and correlated channels. If we transmit a packet, it will successfully be transmitted with some probability. We know whether the transmission is successful or not at the end of the slot due to ACK/NACK.
Below we define the costs for the different channel models.
\subsubsection{i.i.d channels}
We denote the cost received in slot $t+1$ given the state $s(t)$, the decision $\mathbf{u}(t)$, and the information received $W_{t+1}$ (successful transmission or failure), as  $\hat{C}_{t+1}(s(t+1), W_{t+1})\text{.}$ Then, the instantaneous cost at time slot $t$ is described as 
\begin{align}\nonumber
	&C_t(s(t),\mathbf{u}(t))  = \mathbb{E} \left\{\hat{C}_{t+1}(s(t+1), W_{t+1})|s(t), \mathbf{u}(t)\right\}\\& = 
	\begin{cases}
		Z q  + V (p_1 + (1-p_1)\min(A(t)+1,A_{\text{max}}))\text{, if } u_1(t)=1\text{,}\\
		Z(q-p_2)  + V(\min(A(t)+1,A_{\text{max}})) \text{, if } u_2(t)=1\text{.}
	\end{cases}
	\label{eq:instantCost}
\end{align} 
The Bellman's equation is described below
\begin{align}\nonumber
	&V_t (s(t))  =
	\min\limits_{\mathbf{u}(t)} \left(C_t(s(t),\mathbf{u}(t)) \right.\\ \label{Bellman}
	&\left. +\gamma \sum\limits_{s'\in \mathcal{S}} \Pr(s(t+1)=s'|s(t),\mathbf{u}(t)) V_{t+1}(s')\right) \text{,} 
\end{align}
where $0<\gamma<1$. 
\begin{figure*}[t!]
	\begin{align}\nonumber
		&C_t(s(t),\mathbf{u}(t))  = \mathbb{E} \left\{ \hat{C}_{t+1}(s(t+1), W_{t+1} | s(t), \mathbf{u}(t)) \right\} \\ \label{eq: timeCorrelatedCost}& = 
		\begin{cases}
			Zq + V(p_{11,1} + (1-p_{11,1}) \min\left\{A(t)+1,A_{\text{max}}\right\})\text{, if } u_1(t) = 1\text{ and } h_1(t)=1\text{,}\\
			Zq + V(p_{01,1} + (1-p_{01,1}) \min\left\{A(t)+1,A_{\text{max}}\right\})\text{, if } u_1(t) = 1\text{ and } h_1(t)=0\text{,}\\
			Z(q-p_{11,2} )+ V( \min\left\{A(t)+1,A_{\text{max}}\right\})\text{, if } u_2(t) = 1\text{ and } h_2(t)=1\text{,}\\
			Z(q-p_{01,2} )+ V( \min\left\{A(t)+1,A_{\text{max}}\right\})\text{, if } u_2(t) = 1\text{ and } h_2(t)=0\text{,}
		\end{cases}
	\end{align}
\end{figure*}
\subsubsection{Time-correlated channels}
In this case, the instantaneous cost at time slot $t$ is described in \eqref{eq: timeCorrelatedCost},
where $\hat{C}_{t+1}(s(t),\mathbf{u}(t), W_{t+1} | s(t), \mathbf{u}(t), \mathbf{h}(t)$ is the cost received in time slot $t$. 
The Bellman's equation is described below
\begin{align}\nonumber
	&V_t (s(t)) 
	=\min\limits_{\mathbf{u}(t)} \left(C_t(s(t),\mathbf{u}(t)) \right. \\\label{BellmanCorr}
	&\left.+\gamma \sum\limits_{s'\in \mathcal{S}} \Pr(s(t+1)=s'|s(t),\mathbf{u}(t)) V_{t+1}(s')\right) \text{,} 
\end{align}
where $0<\gamma<1$. 

We can solve the recursions in \eqref{BellmanCorr} and \eqref{Bellman} by using \text{backward dynamic programming}. 
We denote by $\mathcal{S}$ the set with all possible states. The algorithm is shown below 
\begin{figure*}[t!]
\begin{center}
	\fbox{
		\begin{minipage}{40em}
			\textbf{Algorithm 2: } Backward Dynamic Programming
			
			\textbf{Step 0.} Initialization:
			
			Initialize the terminal contribution $V_T(s_T)$. Usually, we set the value of $0$ to $V_T(s_T)$.
			
			Set $t=T-1$.
			
			\textbf{Step 1.} Calculate: 
			\begin{align}\nonumber
				V_t (s(t))	=\min\limits_{\mathbf{u}(t)} \left(C_t(s(t),\mathbf{u}(t)) +\gamma \sum\limits_{s'\in \mathcal{S}} \Pr(s(t+1)=s'|s(t),\mathbf{u}(t)) V_{t+1}(s')\right) \text{,}
			\end{align}
			$\forall  s(t) \in \mathcal{S}$.
			
			\textbf{Step 2.} If $t>0$, decrement $t$ and return to step $1$. Else stop. 
		\end{minipage}
	}
\end{center}	
\end{figure*}

We can implement the backward dynamic programming described above for both channel cases. The idea of the algorithm is quite simple. The algorithm runs over the duration of each frame starting at the last slot, i.e., the $T^\text{th}$ of each frame. We initialize the value of being at each state at the last slot, and then, we calculate the value of each state at every time slot by going backward in time as shown in Step 1. 


\subsection{Discussion on the complexity of the proposed solution}
From the results of this paper, while not trivial, it is evident that we can extend
the work to a multi-user scenario based on the methodology provided in this work. More
specifically, in a multi-user scenario, the relaxation of the timely-throughput constraints can
be performed by applying the same methodology as the one utilized in the manuscript. That is, if we have $N$ deadline-constrained users, we can map each constraint to a virtual queue. Therefore, as a first step for the solution, the task will be to stabilize these virtual
queues in order to satisfy the constraints. The drift-plus-penalty algorithm can be applied in
order to solve the problem. The new part in a multi-user scenario will be the dimension of the
space of the MDP. In practice, the state space $\mathcal{S}$ for the problem 
becomes too large, as the number of nodes increases, to evaluate the value function $V_t(s_t)$ for all
states within a reasonable time. Exact solutions such as backward dynamic programming cannot be applied in practice, due to the widely
known curse of dimensionality \cite{powell2007approximate}. However, there are other approximation algorithms that
can solve large problems with low complexity. In Section VI, we provide a low-complexity algorithm that solves the problem in (5), with constraint satisfaction guarantees. 

\section{Low Complexity Algorithm}
	In this section, we provide a low-complexity algorithm that guarantees that the timely-throughput constraint is satisfied. The proposed algorithm takes into account one step ahead environment transitions, and it decides for every slot which user will transmit. To provide a low-complexity algorithm that takes into account also the statistics of the channel, we reformulate the optimization problem and define a new objective function as follows: 
	\begin{align}\label{obj:relaxed}
		\phi(t) = 
		\begin{cases}
			p + (1-p)\min (A(t)+1,A_\text{max}) \text{, if } u_1(t)=1\text{,}\\
			\min (A(t)+1,A_\text{max})\text{, if } u_2(t) = 1\text{,}
		\end{cases}
	\end{align}
	where $p=p_s$, in the case of i.i.d channels, and $p = p_{11}$, if $h(t-1)=1$, $p=p_{01}$, if $h(t-1)=0$, in the case of correlated channels.
	The time average of the objective function is defined as
	$\bar{f} = \lim\limits_{t\rightarrow \infty} \sup \frac{1}{t}\sum\limits_{\tau =0}^{t-1} \mathbb{E}[\phi(\tau)]\text{.}
	$
	The reformulated optimization problem is the following:
	\begin{subequations}
		\label{optreformulatedproblem}
		\begin{align}
			\min\limits_{\mathbf{u}(t)}~ \bar{\phi} \text{,}
			\text{ s.t.}~ \bar{d}_2 \geq q'\label{opt:constrref}\text{,}
		\end{align}
	\end{subequations}
	where $q' = q/T$.
	We define the slot-based Lyapunov drift as
	\begin{align}\label{drift:reformulated}
		\Delta(Z(t)) \triangleq \mathbb{E}[L(Z(t+1) - L(Z(t))|Z(t)]\text{.}
	\end{align}
	By utilizing the fact that $(\max [Q-b,0] + A)^2 \leq Q^2 + A^2 + b^2 + 2Q(A-b)$, we obtain an upper bound on the expression in \eqref{drift:reformulated} as following: 
	$
	\Delta(Z(t)) + V\mathbb{E}[\phi(t)] \leq B + \mathbb{E}[Z(t)(q'-d(t))]\text{,}
	$
	where $B=\frac{(q')^2+d^2_2(t)}{2} < \infty$. The \ac{psDPP} uses the value of the virtual queue, and the channel state, and decides the scheduling at every time slot according to the following optimization problem:
	\begin{align}\label{dpp:myopic}
		\min\limits_{\mathbf{u}(t)} V\phi(t) + Z(t)(q'-d_2(t))\text{.}
	\end{align}
The proposed low complexity algorithm (psDPP) seeks to minimize the objective function
in (25) at every time slot. The intuition behind that algorithm is that it greedily minimizes
the AoI while keeping Z(t) bounded so that the time average constraints are satisfied.
\begin{lemma}
\label{lemma:boundeddrift}
Consider a policy $\omega$, similar to Definition 1. A policy $\omega(t)$ takes probabilistic decisions independent of the state of the system, at every time slot $t$. Let $y(t) = q' - d(t)$. If problem \eqref{drift:reformulated} is strictly feasible, and the second moments of $y(t)$ and $\phi(t)$ are bounded, then there exists $\varepsilon>0$, for which there is an $\omega(t)$ policy such that the following holds
$\mathbb{E}[\phi^{*}(t)] \leq \epsilon\text{, } \mathbb{E}[\phi^{*}(t)] \leq \phi^\text{opt} + \varepsilon\text{,}$
where $y^{*}$ and $\phi^{*}$ are the resulting values of $\omega$ policy, and $\phi^\text{opt}$ is the optimal value function achievable by any optimal stationary randomized policy.
\end{lemma}
\begin{proof}
	The proof follows from \cite[Theorem 4.5]{NeelyBook}.
\end{proof}
\begin{thm}
	The \ac{psDPP} algorithm satisfies any feasible set of timely throughput constraints.
\end{thm}
\begin{proof}
	Since the per slot \ac{DPP} algorithm seeks to minimize expression \eqref{dpp:myopic}, we obtain that
	\begin{align}
		\Delta(Z(t)) + V\mathbb{E}[\phi(t)] & \leq B + Z(t)\mathbb{E}[{y_\text{DPP}(t)}] + V \mathbb{E}[\phi_\text{DPP}(t)]\\
		& \leq B + Z(t)\mathbb{E}(\phi^{*}(t)) + V\mathbb{E}[\phi^{*}(t) ]\text{.}
	\end{align}
	By considering the bound in Lemma \ref{lemma:boundeddrift}, we get
	$\Delta(Z(t)) + V\mathbb{E}(\phi(t)) \leq B + \epsilon Z(t) + V(\phi^{\text{opt}} +\epsilon)\text{,}$
	and taking $\epsilon \rightarrow 0$, we obtain
	$\Delta(Z(t)) + V\mathbb{E}(\phi(t)) \leq B+ V(\phi^{\text{opt}})\text{.}$
	The above expression is in the exact form of the Lyapunov optimization theory, \cite{NeelyBook}[Theorem 4.2]. Therefore, the virtual queue is mean rate stable, and the timely throughput constraint is satisfied. 
\end{proof}

\section{Complexity Analysis}
In this section, we provide the complexity analysis for the two proposed algorithms.
It is easy to see that the psDPP algorithm searches for the minimum value of (25) by iterating over all users. Thus, the complexity of the algorithm depends on the number of users and it is $\mathcal{O}(N)$, where $N$ is the number of total users in the system.

In Algorithm 2 (backward dynamic programming), the algorithm estimates the value function for each state at every slot, by proceeding backward in time within a frame. Thus, for every value of $u(t)$, i.e., for each scheduling decision, the algorithm calculates the expression in (21), given that we know the value of the value function for every possible $s'$ state. Let $N_\text{A}$ and $N_\text{D}$ be the number of AoI-oriented users and the number of deadline-constrained users, respectively. Then, the total number of states, in the case of a Gilbert-Elliot channel model, is $(A^\textbf{max})^{N_\text{A}} K^{N_{d}} 2^{N}$, where $N$ is the total number of users in the system. The backward dynamic programming provides the optimal solution to the problem in (10), but it suffers from the curse of dimensionality, specifically in the number of states.

In a nutshell, the psDPP-based has a much lower complexity than pfDPP. Specifically, its complexity is polynomial with respect to the number of users. On the other hand, pfDPP suffers from the curse of dimensionality. Nevertheless, pfDPP provides a performance bound for this system.

\section{Simulation Results}

\begin{figure}[h!]
\centering
\begin{subfigure}{0.49\textwidth}
\centering
\includegraphics[scale=0.35]{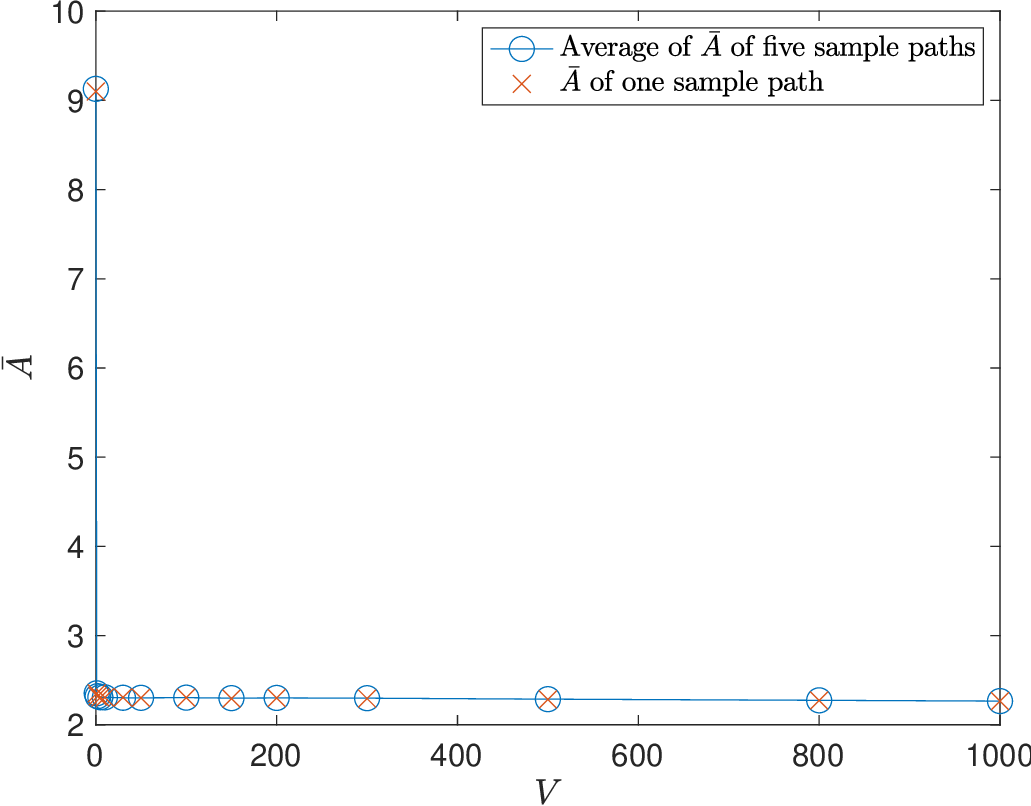}
\caption{Average \ac{AoI} for different values of  $V$.}
\label{fig:ageallvge}
\end{subfigure}
\vspace{1mm}
\begin{subfigure}{0.49\textwidth}
\centering
\includegraphics[scale=0.35]{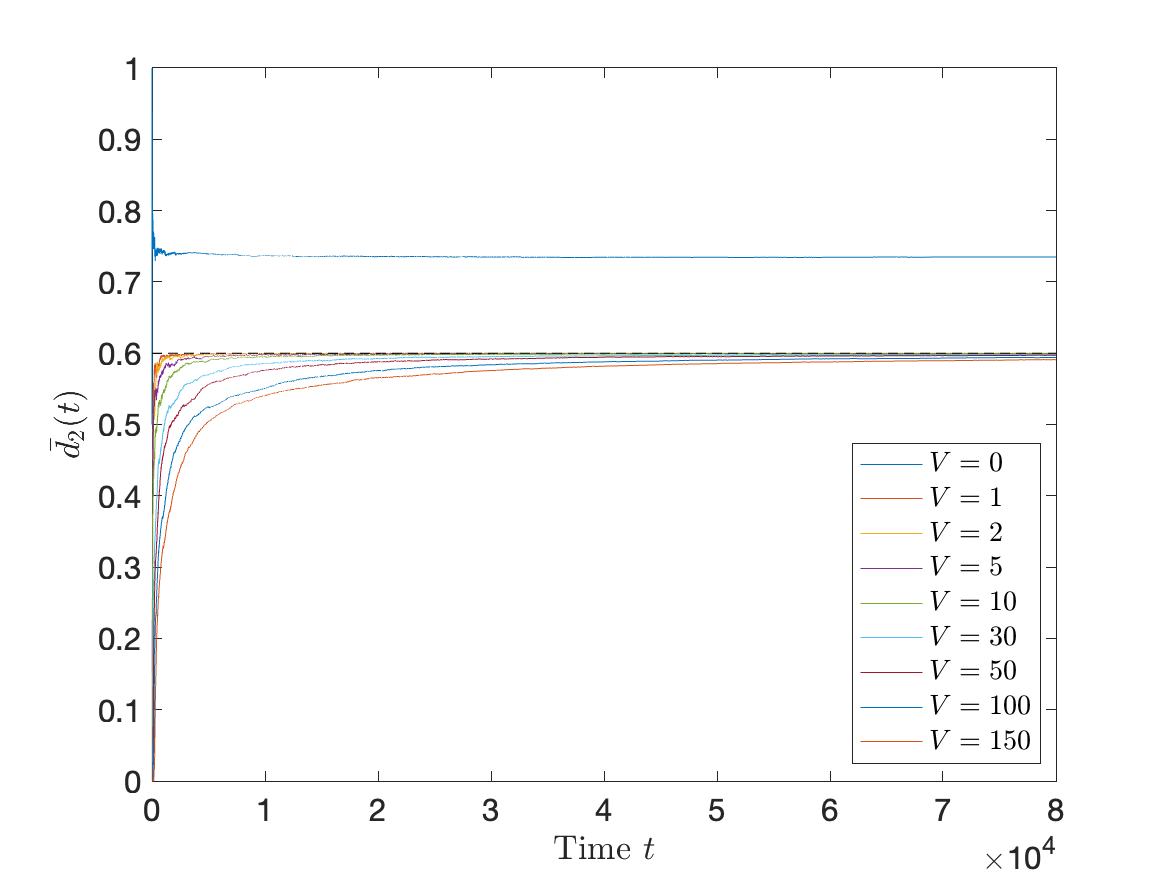}
\centering
\caption{Convergence of the timely throughput constraints.}
\label{fig:thrconvlongge}
\end{subfigure}
\caption{Performance analysis for average AoI and convergence of the algorithm.}
\label{Fig:AoIThroughput}
\end{figure}

\subsection{Performance Analysis for the near-optimal Algorithm}
In this section, we provide results to study the performance of the \ac{pfDPP}, which is proved to be near-optimal, in terms of the average value of the \ac{AoI} and the convergence regarding the timely throughput requirements. We investigate how different values of the weight factor $V$ can affect both the value of \ac{AoI} and the convergence of the algorithm. For the following results, we consider the Gilbert-Elliot channel, with $p_{11,1}=p_{11,2}=0.9$, and $p_{01,1}=p_{01,2}=0.6$. The length of the frame, $T$, is $20$ time slots, and the number of arrived packets at the beginning of every frame, $K$, is equal to $15$ packets. We consider that the maximum value of the \ac{AoI}, $A_\text{max} = 20$. The timely throughput requirements are $q=12$ packets/frame or $q/T=0.6$ packet/slot. We run each experiment for $0.5\times 10^6$ time slots, and we use MATLAB to perform our simulations.

In Fig. \ref{Fig:AoIThroughput}, we provide the average value of \ac{AoI} for different values of $V$ as well as the convergence of the timely throughput constraints. In Fig. \ref{fig:ageallvge}, we compare the average value of \ac{AoI} of five sample paths with that of one sample-path. We observe that the values are quite close to each other. Therefore, the algorithm offers high performance regarding robustness. Furthermore, it is shown that the \ac{AoI} reaches its minimum value even for small values of $V$. We see that for values of $V$ larger than $5$ the change of the value of the average \ac{AoI} is negligible. On the other hand, we observe that the convergence of the algorithm regarding the timely throughput constraints changes dramatically as $V$ increases, as shown in  Fig. \ref{fig:thrconvlongge}. 

\begin{figure}
\centering
\begin{subfigure}{0.4\textwidth}
\centering
\includegraphics[width=1\linewidth]{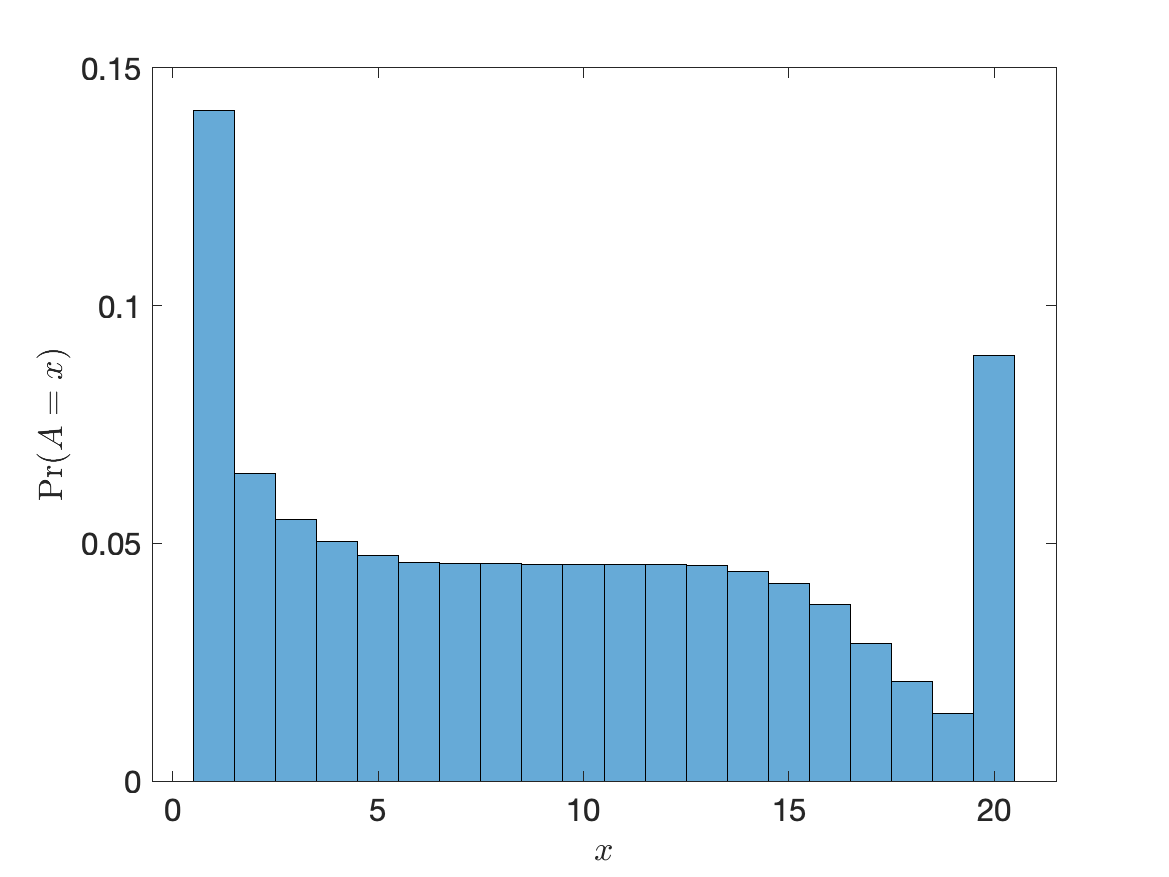}
\caption{$V=0$.}
\label{fig:agedistrv0ge}
\end{subfigure}
\begin{subfigure}{0.4\textwidth}
\centering
\includegraphics[width=1\linewidth]{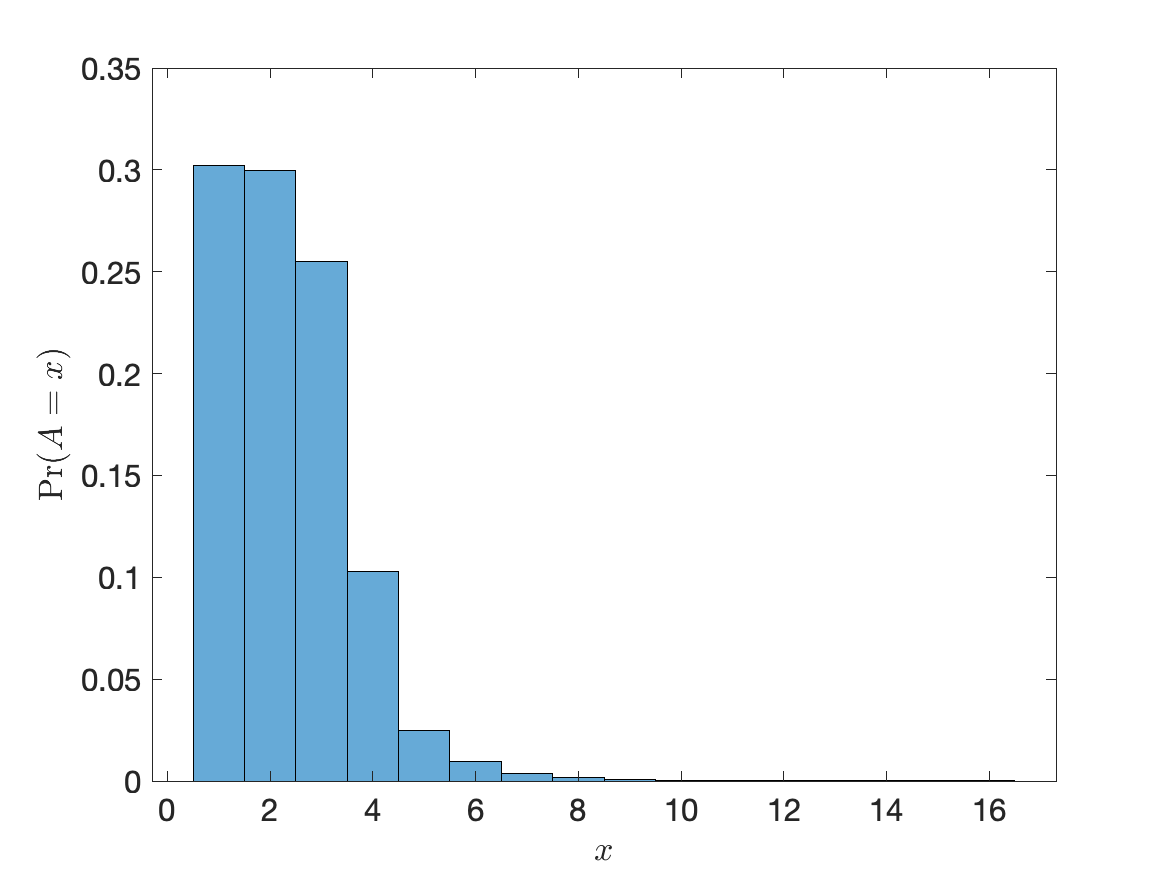}
\caption{$V=5$.}
\label{fig:agedistrv5ge}
\end{subfigure}
\begin{subfigure}{0.4\textwidth}
\centering
\includegraphics[width=1\linewidth]{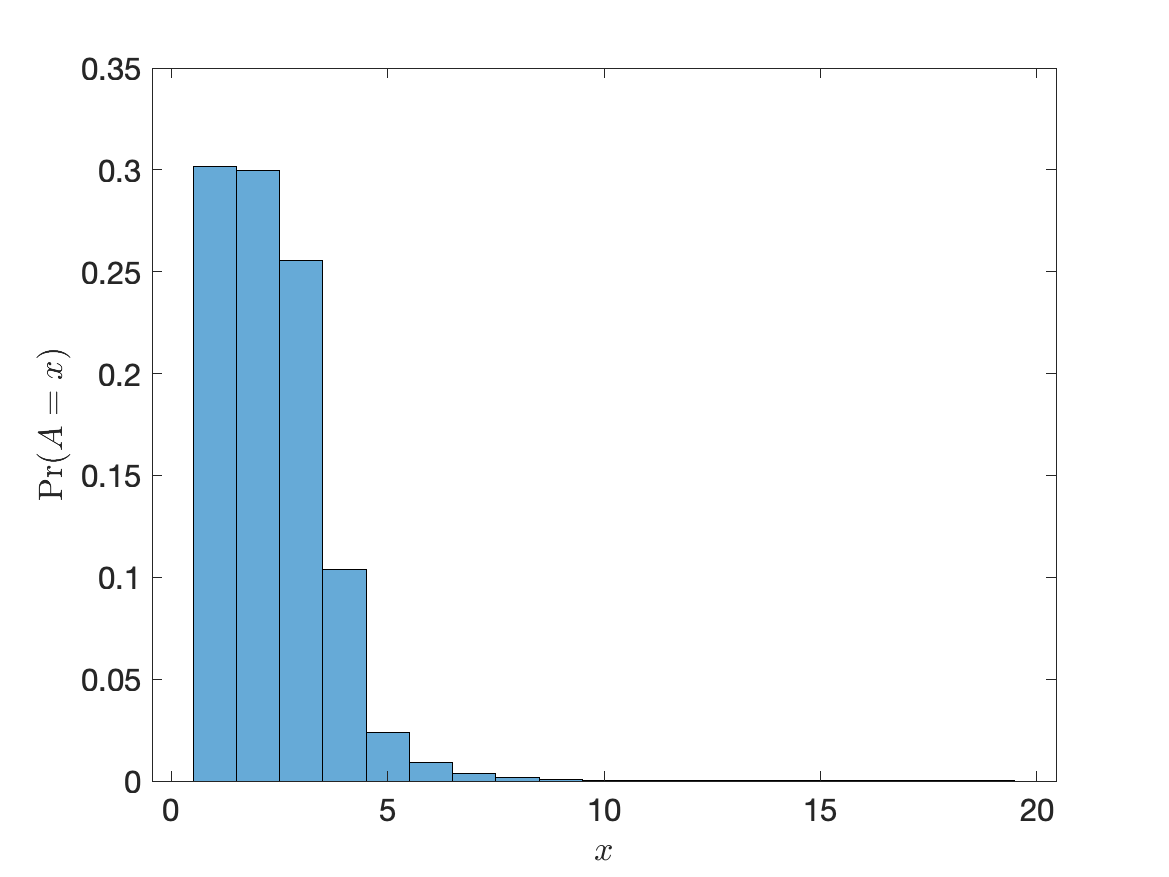}
\caption{$V=10$.}
\label{fig:agedistrv10ge}
\end{subfigure}
\begin{subfigure}{0.4\textwidth}
\centering
\includegraphics[width=1\linewidth]{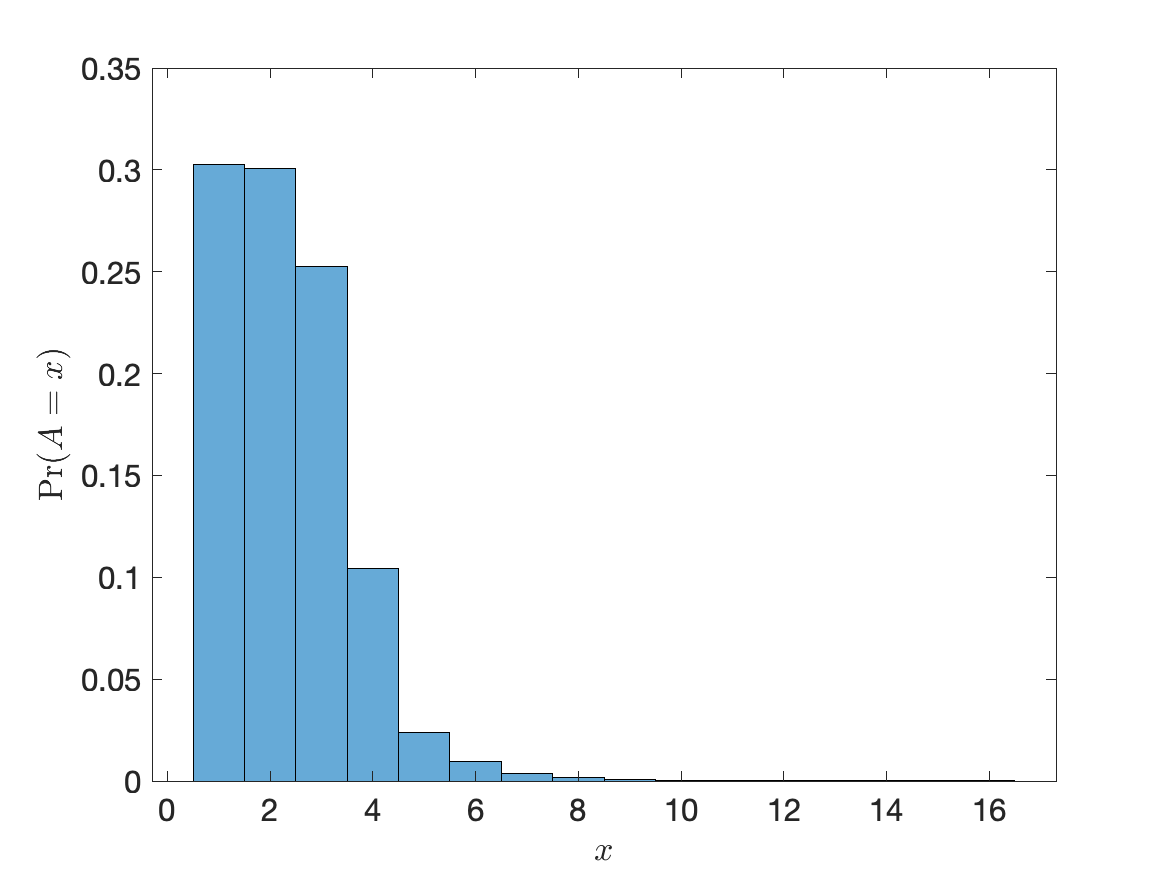}
\caption{$V=100$.}
\label{fig:agedistrv100ge}
\end{subfigure}
\caption{The distribution of  \ac{AoI} for different values of $V$.}
\label{fig:distribtionAoI}
\end{figure}
\begin{figure}[t!]
\begin{subfigure}{0.49\textwidth}
\centering
\includegraphics[width=0.8\linewidth]{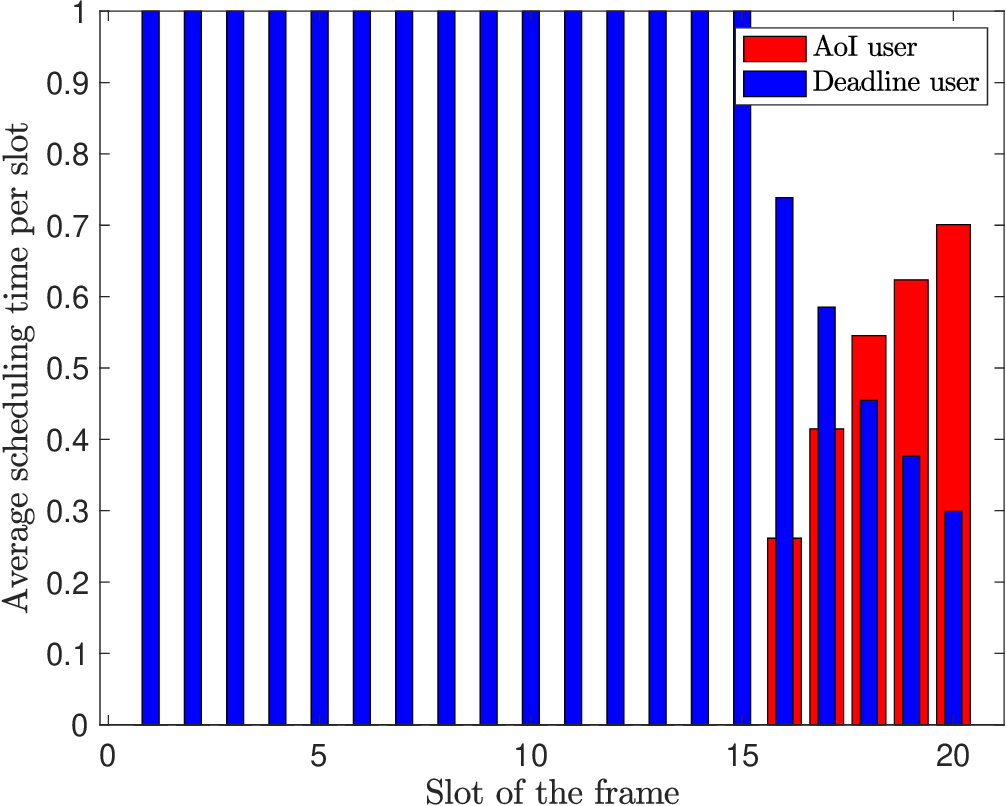}
\caption{$V=0$.}
\label{fig:perageschedv0ge}
\end{subfigure}
\begin{subfigure}{0.49\textwidth}
\centering
\includegraphics[width=0.8\linewidth]{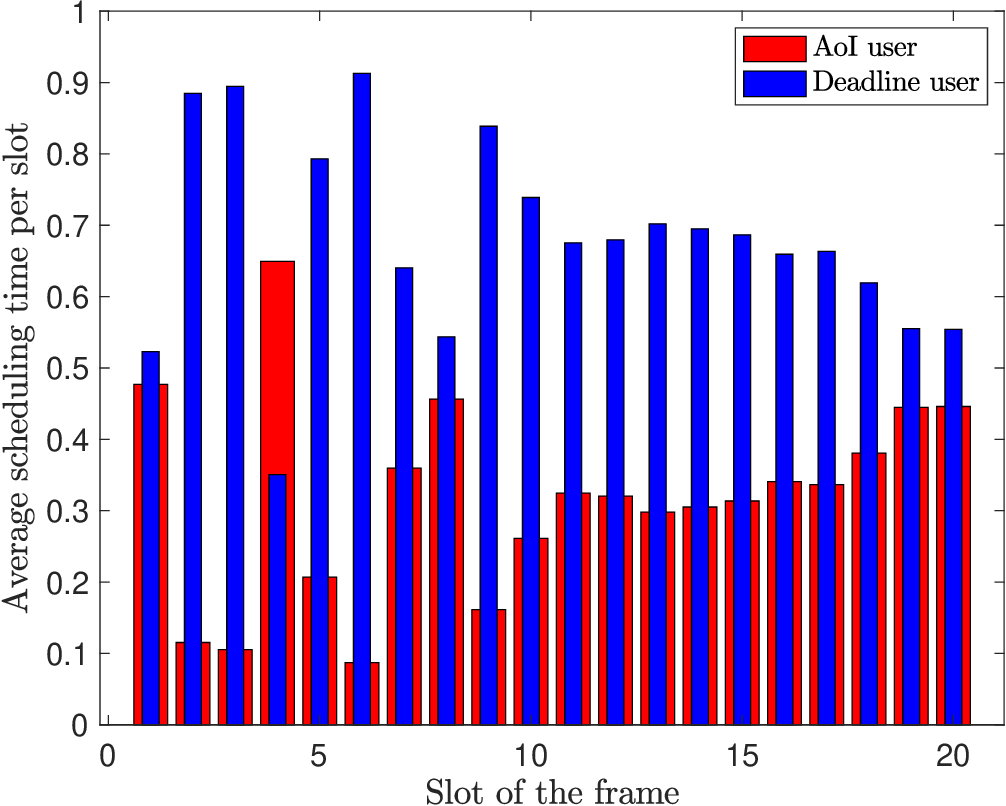}
\caption{$V=5$.}
\label{fig:perageschedv2ge}
\end{subfigure}
\begin{subfigure}{0.49\textwidth}
\centering
\includegraphics[width=0.8\linewidth]{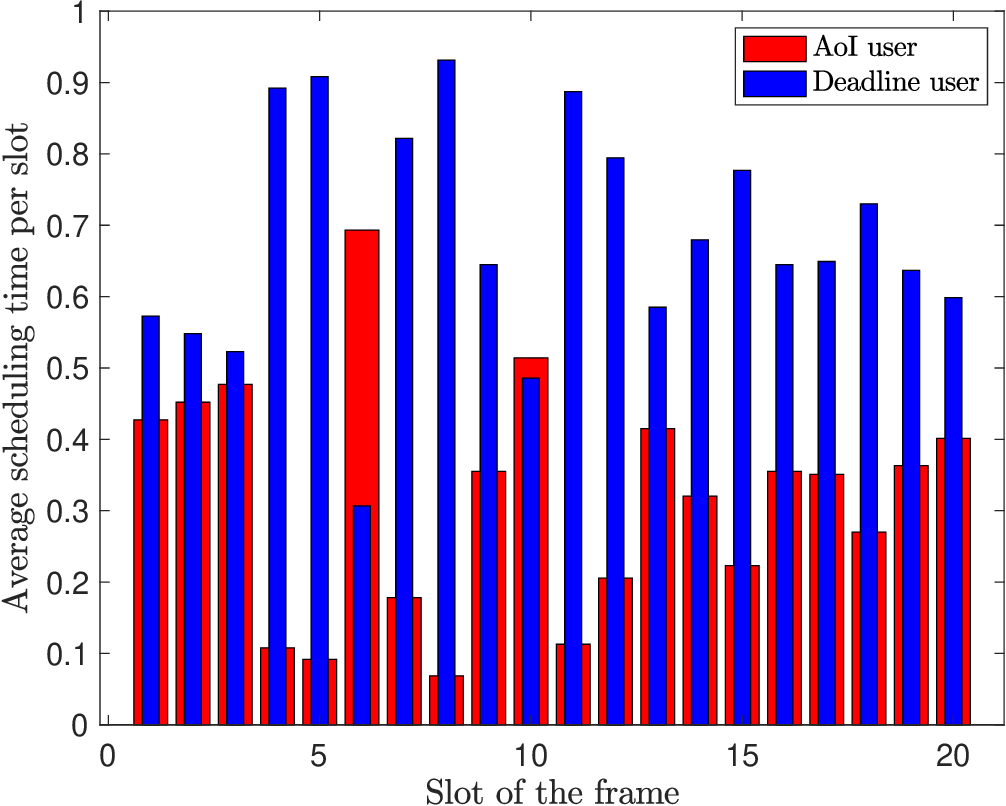}
\caption{$V=10$.}
\label{fig:perageschedv10ge}
\end{subfigure}
\begin{subfigure}{0.49\textwidth}
\centering
\includegraphics[width=0.8\linewidth]{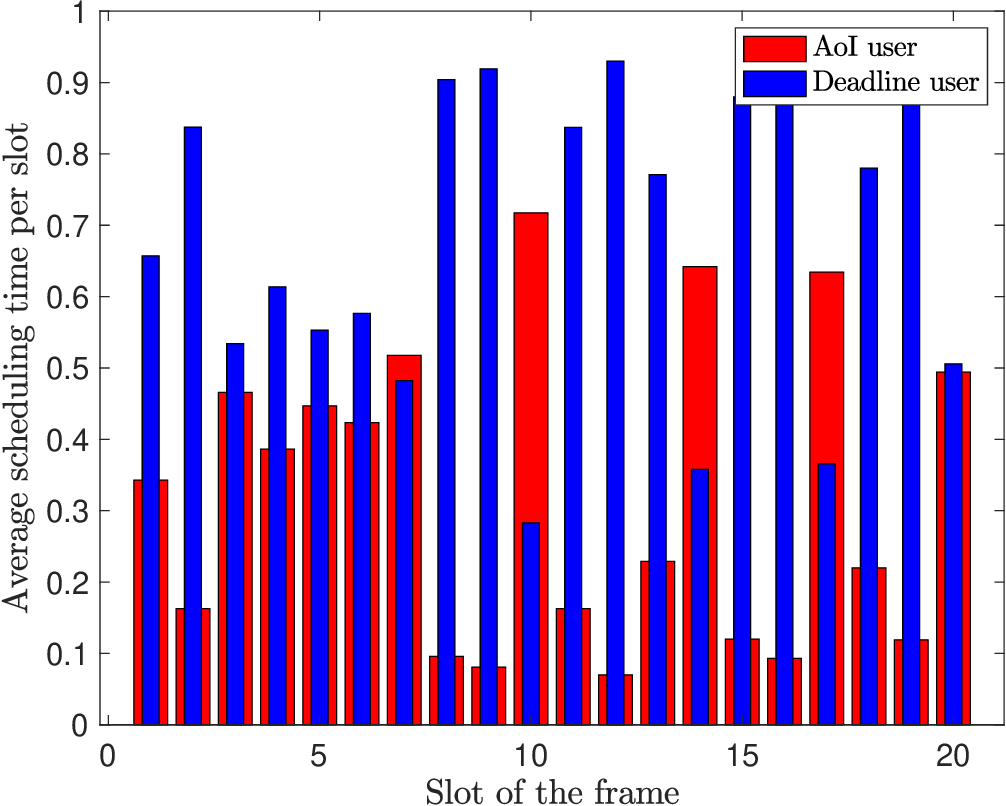}
\caption{$V=100$.}
\label{fig:perageschedv100ge}
\end{subfigure}
\caption{Scheduling time percentage per slot within a frame.}
\label{fig:schedulingtimes}
\end{figure}

In Fig. \ref{fig:distribtionAoI}, we provide results for the distribution of the \ac{AoI} obtained by simulations for different values of $V$. In Fig. \ref{fig:agedistrv0ge}, we observe that the \ac{AoI} reaches the maximum value because its weight is zero. Therefore, the algorithm schedules user $2$ as long as its queue is non empty. After emptying the queue of user $2$, the scheduler schedules user $1$ if there are remaining slots. For values larger or equal than $5$, in Figs. \ref{fig:agedistrv5ge}, \ref{fig:agedistrv10ge}, \ref{fig:agedistrv100ge}, we observe that the \ac{AoI} never reaches its maximum value. Instead, the values of \ac{AoI} fluctuate mainly in the range of $1-5$.

In Fig. \ref{fig:schedulingtimes}, we provide results that show the scheduling time percentage per slot within a frame for each user. In Fig. \ref{fig:perageschedv0ge}, the value of $V$ is $0$. 
For the first $15$ slots the scheduler schedules only user $2$. The number of packets is $15$, therefore, the scheduler needs at least $15$ slots to empty the queue of user $2$ because a packet may be failed to be transmitted and the transmitter has to retransmit it. After emptying the queue of user $2$, if there are remaining slots, the scheduler allocates slots to user $1$. In Figs. \ref{fig:perageschedv2ge}, \ref{fig:perageschedv10ge}, \ref{fig:perageschedv100ge}, we observe that for different values of $V$ the scheduling time for every user changes. However, as we observe in Fig. \ref{fig:ageallvge}, the values of average \ac{AoI} are quite close to each other. We observe that for $V=5$, the scheduling time for user $1$ is spread within the frame. That means that the percentage of scheduling time does not change significantly from slot to slot. On the other hand, for larger values of $V$, we observe that the percentage of the scheduling time for user $1$ changes from slot to slot, especially after the $10^\text{th}$ slot and for $V=100$  because the \ac{AoI} is multiplied by a large weight and if the value of \ac{AoI} starts increasing as time passes by the corresponding term becomes quite large. Therefore, the scheduler schedules the user $1$ in order to minimize the objective function. 
\subsection{Frame-based \ac{DPP} Vs slot-based \ac{DPP}}
\begin{figure}[h!]
\centering
\includegraphics[scale=0.5]{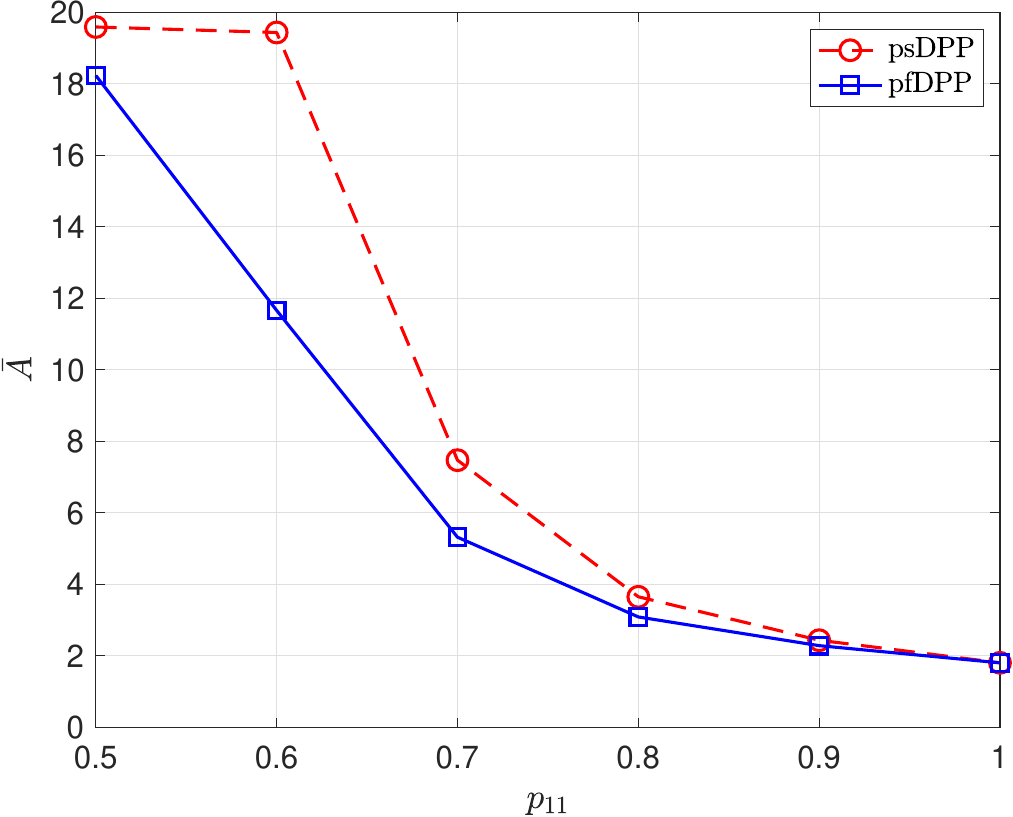}
\caption{pfDPP Vs psDPP.}
\label{fig:pfDPPvspsDPP}
\end{figure}
\revtwo{
In this subsection, we compare the performance of the near-optimal algorithm (\ac{pfDPP}) with the performance of (\ac{psDPP}). The transition probability for both users, $p_{\TC{01}}$, is equal to $0.6$. We obtain results for the average \ac{AoI} by changing the transition probability, $p_{11}$, with values between $0.5$ to $1$ with step $0.1$. We select $V=200$.}

\revtwo{
In Fig. \ref{fig:pfDPPvspsDPP}, we observe that for low values of transition probability $p_{11}$, the near-optimal algorithm (\ac{pfDPP}), has significantly better performance than the low-complexity algorithm (\ac{psDPP}). However, we see that for values of $p_{11}$ higher than $0.7$, the performance of the two algorithms is quite close. Therefore, from the results, we obtain that the statistics of the channel play an important role on how the decisions are made based on the look-ahead principle.}

\section{Conclusion \& Future Work}
In this work, we considered a wireless network consisting of time-critical users with different requirements under uncertain environments. We studied how an \ac{AoI}-oriented user and a deadline-constrained user can share the same resources to satisfy their requirements. To this end, we formulated a stochastic optimization problem for minimizing the average \ac{AoI} while satisfying the timely-throughput constraints which is a \ac{CMDP} problem. In order to solve the problem, we utilized tools from Lyapunov optimization and \ac{MDP}. With this approach, we reduced the \ac{CMDP} to an unconstrained
weighted stochastic shortest path problem. We implemented backward dynamic programming to solve the unconstrained problem. Simulation results showed that the timely-throughput constraint is satisfied while minimizing the average \ac{AoI}. Furthermore, we provided the trade-off between the minimum value of \ac{AoI} and the convergence of the average constraint. Also, based on the Lyapunov optimization theory, we provide a sub-optimal low-complexity algorithm that guarantees that the timely-throughput of user $2$ is above the required threshold. 
As a future work, an interesting direction is a case with variable lengths of the frames and random packet arrivals of the deadline-constrained user. Also, a realistic scenario is the multi-user scenario that will make our problem even more interesting but more difficult to be solved. In this case, backward dynamic programming has prohibitive complexity and approximation algorithms could give a good solution.

\begin{appendices}
\section{Drift Bound: Proof of Lemma 2}\label{Appendix:Proof of Lemma 2}
\begin{proof}
By utilizing $(\max \left[Q-b,0 \right]+A)^2 \leq  Q^2+A^2+b^2 + 2Q(A-b)\text{,}$ we get 
\begin{align}
	Z^2(t+1) \leq Z^2(t) + q^2 + d_2^2(t) + 2Z(t)(q-d_2(t)) \text{,}
\end{align}
by adding and substituting  the term $2Z(t_m)(q-d_2(t))$, we get 
\begin{align}\nonumber
	Z^2(t+1) 
	 &\leq Z^2(t) + q^2 +1 + 2Z(t_m)(q-d_2(t)) \\
	 &+ 2(t-t_m)\text{,}
\end{align}
diving by $2$, using telescoping sums, and using that $\sum\limits_{\tau=t_m}^{t_m+T-1} (\tau -t_m) = \frac{T(T-1)}{2}$, we get, 
\begin{align}
	&\frac{Z^2(t_m+T) - Z^2(t_m)}{2} \\
	&\leq \frac{Tq^2 + T(T-1)}{2} + Z(t_m) \sum\limits_{\tau=t_m}^{t_m+T-1} (q-d_2(\tau))\text{.}
\end{align} 
Finally, by taking conditional expectations, we get the result, 
\begin{align}
	\label{ineq:BouundLyapunov}
	\Delta(L(Z(t_m))) \leq B + \mathbb{E} \left[ G(t_m) | Z(t_m)\right]\text{.}
\end{align}
\end{proof}
\section{Virtual Queue Stability}
From \eqref{BoundVirtualQueue}, we get 
$
\label{BoundD}
\frac{1}{R} \sum\limits_{m=0}^R \mathbb{E} \left[Z(t_m)\right] \leq D\text{,}
$
where $D$ is a bounded constant, if it holds, then 
$
\lim\limits_{t\rightarrow \infty} \sup \frac{1}{t} \sum\limits_{\tau=0} ^{t-1} \mathbb{E}\left[Z(\tau)\right] < \infty \text{.}
$
\begin{proof}
Since every frame is at least one slot, we have
\begin{align}
	\sum\limits_{\tau=0}^{R-1} Z(\tau)\leq \sum\limits_{r=0} ^{R-1} \sum\limits_{\tau = t_m}^{t_m+T-1} Z(\tau)\leq \sum\limits_{r=0}^{R-1} T \left[Z(t_m) + T\right]\text{.}
\end{align}
By taking the expectations, we get 
$\sum\limits_{\tau=0}^{R-1} \mathbb{E} \left[Z(\tau)\right] \leq RT^2 + DT\text{.}$
By taking the limit $R\rightarrow \infty$, and diving by $R$, we get,
$\lim\limits_{R\rightarrow \infty} \sup \frac{1}{R} \sum\limits_{\tau =0}^{R-1} \mathbb{E} \left[ Z(\tau)\right] \leq T^2 +DT\text{.}$
That proves the result.
\end{proof}

\section{Proof of Theorem 2}\label{Appendix: Proof of Theorem 2}	
\begin{proof}
Part I - Queue bound. From \eqref{ineq:BouundLyapunov}, we have 
\begin{align}\label{ineq:BouundLyapunov2}
	&\Delta(t_m) + \mathbb{E}\left[V\sum\limits_{\tau=t_m}^{t_m+T-1} A(\tau)|Z(t)\right]  
	\leq B \\
	&+ \mathbb{E} \left[G_\text{opt}(t_m)  + V\sum\limits_{\tau=t_m}^{t_m+T-1} A_\text{opt}(\tau)  \right] + C + \delta Z(t_m) +V\delta\text{.}
\end{align}
Note that $A(\tau) \geq A_\text{opt}(\tau)$ therefore, we have
\begin{align}\nonumber
	\Delta(L(Z(t_m)))  & \leq B  + \mathbb{E} \left[ G_\text{opt}(t_m) \right] + \mathbb{E} \left[V\sum\limits_{\tau=t}^{t_m+T-1}  A_\text{opt}(\tau)  \right] \\\nonumber
	& + C + \delta Z(t_m) +V\delta \\ \nonumber
	\Delta(L(Z(t_m))) & \leq B +\mathbb{E}\left[G_\text{opt}(t_m) \right] + VT\bar{A}_{\text{opt}}\\\label{ineq:DriftStar}
	& + C + \delta Z(t_m)  + V\delta\text{.}
\end{align}
Now consider that the policy $\mathbf{u}_2(t)$ from Assumption 2,
\begin{align}
	\label{ineq:G}
	\mathbb{E} \left[G(t_m) |Z(t)\right] \leq -\epsilon T Z(t_m)\text{,}
\end{align}
Substituting \eqref{ineq:G} into \eqref{ineq:DriftStar}, we get
\begin{align}\nonumber
	\Delta(L(Z(t_m))) & \leq B - \epsilon T Z(t_m) \\
	&+ VT\bar{A}_{\text{opt}}+C + \delta Z(t_m) +V\delta \\ \nonumber
	\mathbb{E} \left[L(t_m+1)\right] - \mathbb{E}\left[L(t_m)\right] & \leq B + C + VT\bar{A}_{\text{opt}} \\
	& + (\delta - \epsilon T) Z(t_m) + V\delta \text{.}
\end{align}
Summing over $m\in \left\{0, \ldots, R-1\right\}$ and using the fact that $\mathbb{E} \left[L(t_0)\right]=0$, and utilizing the telescoping sums, we get  
\begin{align}\nonumber
	\frac{\mathbb{E}\left[L(t_m)\right]}{R} & \leq B + C + V(T\bar{A}_{\text{opt}} + \delta) \\ 
	&+ \frac{\delta -\epsilon T}{R} \sum\limits_{\tau=t_m}^{t_m+T-1} Z(\tau) \\ \nonumber
	\frac{\epsilon T - \delta}{R} \sum\limits_{\tau = t_m}^{t_m + T -1} Z(t_m) &\leq \frac{-\mathbb{E}\left[L(t_m)\right]}{R} + B +C \\&+V(T\bar{A}_{\text{opt}} + \delta)\text{,}
\end{align}
By ignoring the negative part, we get 
$
\frac{1}{R} \sum\limits_{r=0}^R \mathbb{E} \left[Z(t_m)\right] \leq \frac{B +C +V(T\bar{A}_{\text{opt}} + \delta)}{\epsilon T - \delta}\text{.}
$
In addition, by taking $R\rightarrow \infty$, we obtain that
\begin{align}\label{BoundVirtualQueue}
	\lim\limits_{R\rightarrow \infty} \sup \frac{1}{R} \sum\limits_{r=0}^R \mathbb{E} \left[Z(t_m)\right] \leq \frac{B +C +V(T\bar{A}_{\text{opt}} + \delta)}{\epsilon T - \delta}
\end{align}
From \eqref{BoundVirtualQueue}, we obtain that the virtual queue is bounded above, and therefore, the virtual queue is strongly stable. Therefore, the constraints in \eqref{opt:constr} are satisfied (See Appendix A). 

Part II - Approximation theorem. Define the probability $\gamma = \frac{\delta}{\epsilon T}$. This is a valid probability by assumption ($\epsilon T >\delta$). We consider a policy $\omega$ performed over the frame $\tau \in \left\{t_m, \ldots, t_m+T-1\right\}$. The policy $\omega$ is a randomized mixture from Assumptions $1$ and $2$. At the beginning of each frame, flip a coin with probabilities $\gamma$ and $1-\gamma$, and apply one of the policies as follows
1) With probability $\gamma$, apply policy $\omega_2$ from Assumption $2$, 2) with probability $1-\gamma$, apply policy $\omega_1$ from Assumption $1$.
From \eqref{eq:Ass1Obj}, we have 
\begin{align}
	\label{AStarBound}
	\mathbb{E} \left[\sum\limits_{\tau=t_m}^{t_m+T-1} A(\tau)\right] \leq T(\gamma A_{\text{max}} + (1-\gamma)A_{\text{opt}})\text{,}
\end{align}
we also have from \eqref{eq:Ass2const}, 
\begin{align}
	\label{BoundMinusDelta}
	\mathbb{E} \left[\sum\limits_{\tau=t_m}^{t_m+T-1} q - d(\tau)\right] \leq -\gamma \epsilon T = -\delta\text{,}
\end{align}
plugging \eqref{AStarBound}, \eqref{BoundMinusDelta}, into \eqref{ineq: CdApprox}, we get 
\begin{align}\nonumber
 &	\Delta(L(Z(t_m))) +\mathbb{E} \left[ V\sum\limits_{\tau=t_m}^{t_m+T-1} A(\tau) | Z(t_m)\right] \\
	&  \leq B + T(\gamma(A_{\text{max}} -1 )) + (1-\gamma)A^{\text{opt}} + C + V\delta \\\nonumber
	&\mathbb{E}\left[L(t_m+1)\right] - \mathbb{E}\left[L(t_m)\right] + V\mathbb{E} \left[\sum\limits_{\tau=t_m}^{t_m+T-1}A(\tau)\right] \\	&\leq B +VT(\gamma(A_{\text{max}} - 1)) + C + V\delta\text{.}
\end{align}
Summing over $m \in \left\{0, \ldots, M-1\right\}$ and dividing by $VMT$ gives the following, $\forall M>0$,
\begin{align}\nonumber
	\frac{1}{MT} \mathbb{E} \left[\sum\limits_{\tau=0}^{t_M-1} A(\tau)\right] & \leq \frac{B}{VT} +  (\gamma(A_{\text{max}}-1)\\
	& + (1-\gamma)A^{\text{opt}} ) + \frac{C}{VT}  + \frac{\delta}{T}\text{.} 
\end{align}
Finally, by taking $M\rightarrow \infty$, we get the result.
\end{proof}

\end{appendices}

\bibliography{MyBibAgeTimelyTh}
\bibliographystyle{ieeetr}

\end{document}